\DeclarePairedDelimiter{\floor}{\lfloor}{\rfloor}
\newtheorem{theorem}{Theorem}
\newtheorem{corollary}{Corollary}
\g@addto@macro\normalsize{%
  \setlength\abovedisplayskip{2pt}
  \setlength\belowdisplayskip{2pt}
}
\def\sharedaffiliation{%
\end{tabular}
\begin{tabular}{c}}
\begin{document}
\conferenceinfo{IFIP WG 7.3 Performance 2017.}{Nov.\ 14-16, 2017, New York, NY USA}
\title{Straggler Mitigation by Delayed Relaunch of Tasks}

\numberofauthors{3}
\author{
  Mehmet Fatih Akta\c{s} \\
  mehmet.aktas@rutgers.edu
  \alignauthor
  Pei Peng \\
  pei.peng@rutgers.edu
  \alignauthor
  Emina Soljanin \\
  emina.soljanin@rutgers.edu
  \sharedaffiliation
    \affaddr{Department of Electrical and Computer Engineering, Rutgers University}
}

\maketitle

\abstract{
  Redundancy for straggler mitigation, originally in data download and more recently in distributed computing context, has been shown to be effective both in theory and practice. Analysis of systems with redundancy has drawn significant attention and numerous papers have studied pain and gain of redundancy under various service models and assumptions on the straggler characteristics. We here present a cost (pain) vs. latency (gain) analysis of using simple replication or erasure coding for straggler mitigation in executing jobs with many tasks. We quantify the effect of the tail of task execution times and discuss tail heaviness as a decisive parameter for the cost and latency of using redundancy. Specifically, we find that coded redundancy achieves better cost vs. latency tradeoff than simple replication and can yield reduction in both cost and latency under less heavy tailed execution times. We show that delaying redundancy is not effective in reducing cost and that delayed relaunch of stragglers can yield significant reduction in cost and latency. We validate these observations by comparing with the simulations that use empirical distributions extracted from Google cluster data.
}

\section{Introduction and Model}
\noindent\textbf{Motivation:}
Distributed (computing) systems aim to attain scalability through parallel execution of multiple tasks constituting a job. Each task is run on a separate node, and the job is completed only when the slowest task is finished. It has been observed that task execution times have significant variability, e.g., because of multiple job resource sharing, power management \cite{dean2013tail}. The slowest tasks that determine the job execution time are known as "stragglers''.

Two common performance metrics for distributed job execution are 1) \emph{Latency,} measuring the execution time, and 2) \emph{Cost,} measuring the resource usage. Job execution is desired to be fast and with low cost, but these are conflicting objectives. Replicating tasks and running the replicas over separate nodes has been shown to be effective in mitigating the effect of stragglers on latency \cite{ananthanarayanan2013effective}, and is used in practice \cite{dean2008mapreduce}. Recent research proposes to delay replication in order to reduce the cost \cite{wang2015using}, and clone only the tasks that at some point appear to be straggling.

Erasure coding is a more general form of redundancy than simple replication, and it has been considered for straggler mitigation in both data download \cite{joshi2015queues} and, more recently, in distributed computing context \cite{dutta2016short, CodedGradientDescent:HalbawiAS17}.
We took this line of work further by analyzing the effect of coding and replication on the tradeoff between cost and latency as in \cite{ourmamapaper}. We examined whether the redundancy should be simple replication or coding, and when it should be introduced. We here extend the cost and latency analysis in \cite{ourmamapaper} for systems that use redundancy together with task relaunch.
\\[1ex]
\noindent\textbf{System Model:}
In our system, a job is split into $k$ tasks. Job execution starts with launching all its $k$ tasks, and the redundancy is introduced only if the job is not completed by some time $\Delta$. Note that we don't consider queueing of jobs or tasks; all tasks start service together.

In replicated-redundancy $(k, c, \Delta)$-system, if the job still runs at time $\Delta$, then $c$ replicas for each remaining task are launched. In coded-redundancy $(k, n, \Delta)$-system, if the job still runs at time $\Delta$, $n-k$ redundant parity tasks are launched where completion of any $k$ of all launched tasks results in total job completion (see Fig.~\ref{fig:fig_delayed_red}). Note that this assumption does not impose severe restrictions. Any linear computing algorithm can be implemented with this $k$-out-of-$n$ structure simply by using linear erasure codes. Particular examples can be found in e.g., \cite{dutta2016short, CodedGradientDescent:HalbawiAS17} and references therein. If system implements task relaunch, then tasks that remain running at time $\Delta$ are canceled, and fresh  copies are launched in their place immediately together with the redundant tasks.
\begin{figure}[t]
  \centering
  \includegraphics[width=0.5\textwidth, keepaspectratio=true]{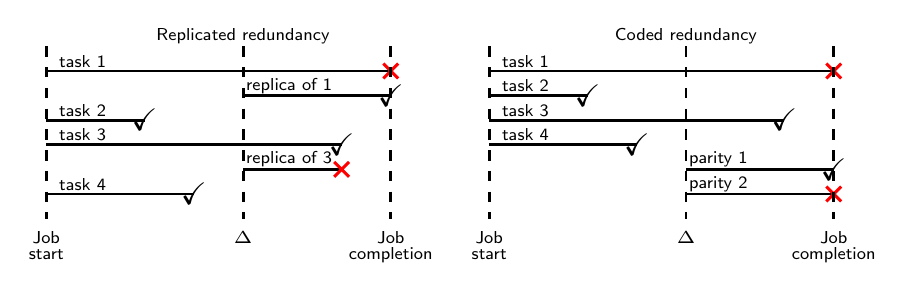}
  \caption{A job with four tasks is executed with delayed redundancy. Check marks represents task completion and crosses represents task cancellation. (With replication,  exact clones of the straggler tasks are introduced, while with coding parity tasks can be used as a ``clone'' for any task, therefore, stragglers do not have to be tracked down.)}
  \label{fig:fig_delayed_red}
\end{figure}

\begin{figure*}[t]
  \centering
  \begin{subfigure}[]{.32\textwidth}
    \centering
    \includegraphics[width=1\textwidth, keepaspectratio=true]{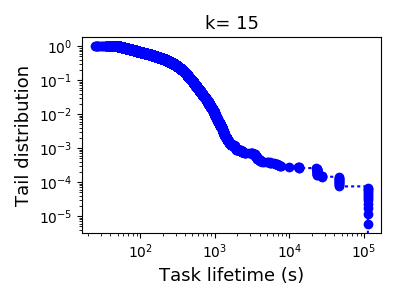}
  \end{subfigure}
  \begin{subfigure}[]{.32\textwidth}
    \centering
    \includegraphics[width=1\textwidth, keepaspectratio=true]{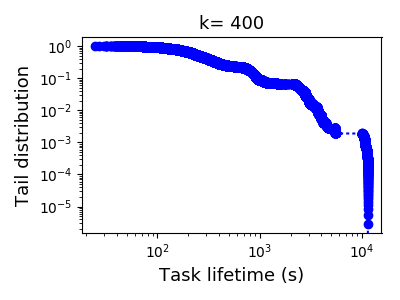}
  \end{subfigure}
  \begin{subfigure}[]{.32\textwidth}
    \centering
    \includegraphics[width=1\textwidth, keepaspectratio=true]{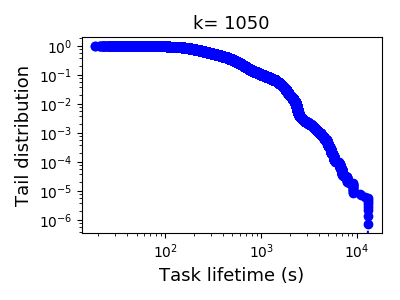}
  \end{subfigure}
  \caption{Empirical tail distribution of task completion times for Google cluster jobs with number of tasks $k=15, 400, 1050$.}
  \label{fig:plot_google_empiricaltail}
\end{figure*}

We assume that task execution times are iid and follow one of the two canonical distributions: 1)
shifted exponential  ${\it SExp}(D, \mu)$ modeling tasks that take some positive minimum time $D$ and have an exponential tail with decaying rate $\mu$ modeling the randomness inherent in the system and 2) $\textit{Pareto}(\lambda, \alpha)$ with positive minimum value $\lambda$ and  a power law tail index $\alpha$. Pareto is a canonical heavy tailed distribution that is observed to fit task execution times in real computing systems \cite{dean2013tail, reiss2012towards}.

Fig.~\ref{fig:plot_google_empiricaltail} plots the empirical tail distribution of task completion times\footnote{Task lifetimes are calculated as the difference between timestamps for SCHEDULE and FINISH events in \cite{reiss2011google}.} for jobs with $15$, $400$, and $1050$ tasks in the Google Trace data \cite{reiss2011google}. Note that the $x$ and $y$ axes are in log scale, and thus an exponential tail would have appeared as a curve decaying exponentially while a true power law tail (e.g., Pareto) would have pronounced a linear decay at a constant rate. Empirical tail distributions in the figure exhibit exponential decay at small values and a trend similar to linear decay at larger values. Note that the steep decay at the far right edge is due to bounded support of the distributions. Even though we cannot conclude that these empirical distributions are distributed as Pareto, they clearly exhibit more variability and have heavier tail than an Exponential.

We define the cost of job execution as the sum of the lifetimes of all tasks (including redundant ones) involved in the job execution. This definition reflects ``pay for resources'' pricing, which is the most commonly used model in cloud services offered by Amazon, Google App Engine, Windows Azure \cite{al2013cloud}. There are two main setups that define cost: 1) Cost with task cancellation $C^c$; remaining outstanding tasks are canceled upon the job completion, which is a viable option for distributed computing with redundancy, 2) Cost without task cancellation $C$; tasks remaining after job completion run until they complete, which for instance is the only option for data transmission over multi-path network with redundancy.

In this paper, we analyze the effect of replicated and coded redundancy on cost and latency tradeoff. Specifically, we present exact expressions for expected latency and cost for redundancy with and without task relaunch. Using these expressions, we show the correlation of pain and gain of redundancy with the tail heaviness of the task execution times.
\\[1ex]
\textbf{Summary of Observations:} Coding allows us to increase degree of redundancy with finer steps than replication, which translates into greater achievable cost vs.\ latency region. Delaying redundancy is not effective in trading off latency for cost. Therefore, primarily the degree of redundancy should be tuned for the desired cost and latency values. Coding is shown to outperform replication in terms of cost and latency together. When the task execution times are heavy tailed, redundancy can reduce cost and latency simultaneously, where the reduction depends on the tail heaviness. For heavy tailed tasks, we show that relaunching tasks, even without any redundancy, is sufficient to return significant reduction in cost and latency.

\vspace{1ex}
\noindent\textbf{Notation:}
$T$ and $C$ denote latency and cost of job execution.
$H_n$ is the $n$th harmonic number defined as $\sum_{i=1}^n \frac{1}{i}$ for $n \in Z^+$ or equivalently as $\int_0^1 \frac{1-x^n}{1-x} dx$ for $n \in R$. $H_{n^2}$ denotes the generalized harmonic number of order $n$ of two defined as $\sum_{i=1}^n \frac{1}{i^2}$. Incomplete Beta function $B(q;m,n)$ is defined for $q \in [0,1]$, $m, n \in R^+$ as $\int_0^q u^{m-1}(1-u)^{n-1} du$, and Beta function as $B(m,n) = B(1;m,n)$. Gamma function $\Gamma(x)$ is defined as $\int_0^{\infty} u^{x-1}e^{-u}du$ for $x \in R$ and as $(x-1)!$ for $x \in Z^+$.

\section{Latency and Cost Analysis}
In a previous work \cite{ourmamapaper}, we concluded that delaying replicated or coded redundancy is not effective to reduce cost. This conclusion is based on the observation that delaying redundancy can bring reduction in cost (gain) only after significant increase (pain) in latency, at which point one can achieve less latency for the same cost by simply reducing the level of redundancy. This section gives cost vs.\ latency analysis of zero-delay redundancy systems, where cost and latency are expressed in terms of level of redundancy $c$ or $n$ and parameters of task execution time distribution.

\begin{figure*}[t]
  \centering
  \begin{subfigure}[]{.32\textwidth}
    \centering
    \includegraphics[width=1\textwidth, keepaspectratio=true]{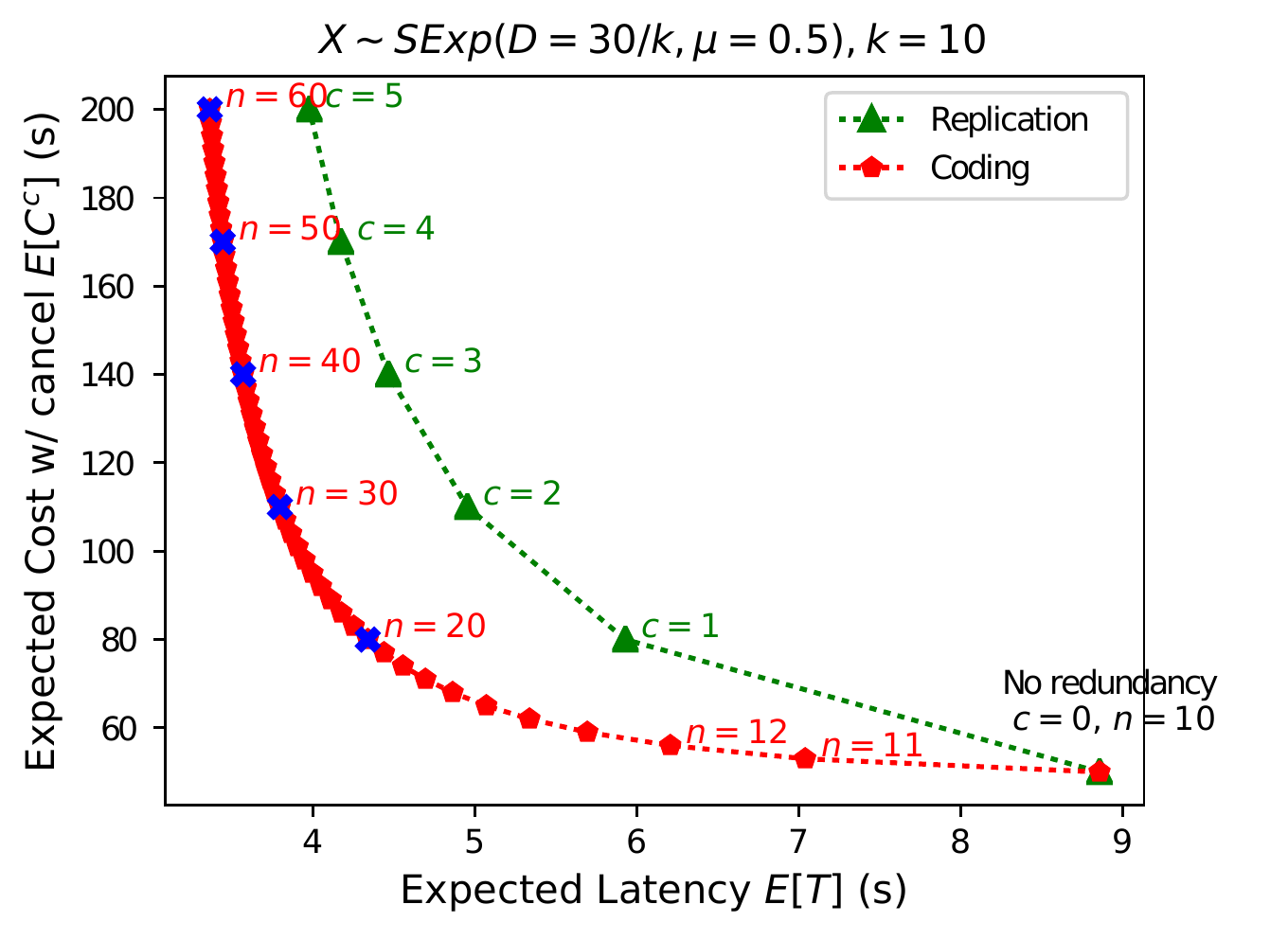}
  \end{subfigure}
  \begin{subfigure}[]{.32\textwidth}
    \centering
    \includegraphics[width=1\textwidth, keepaspectratio=true]{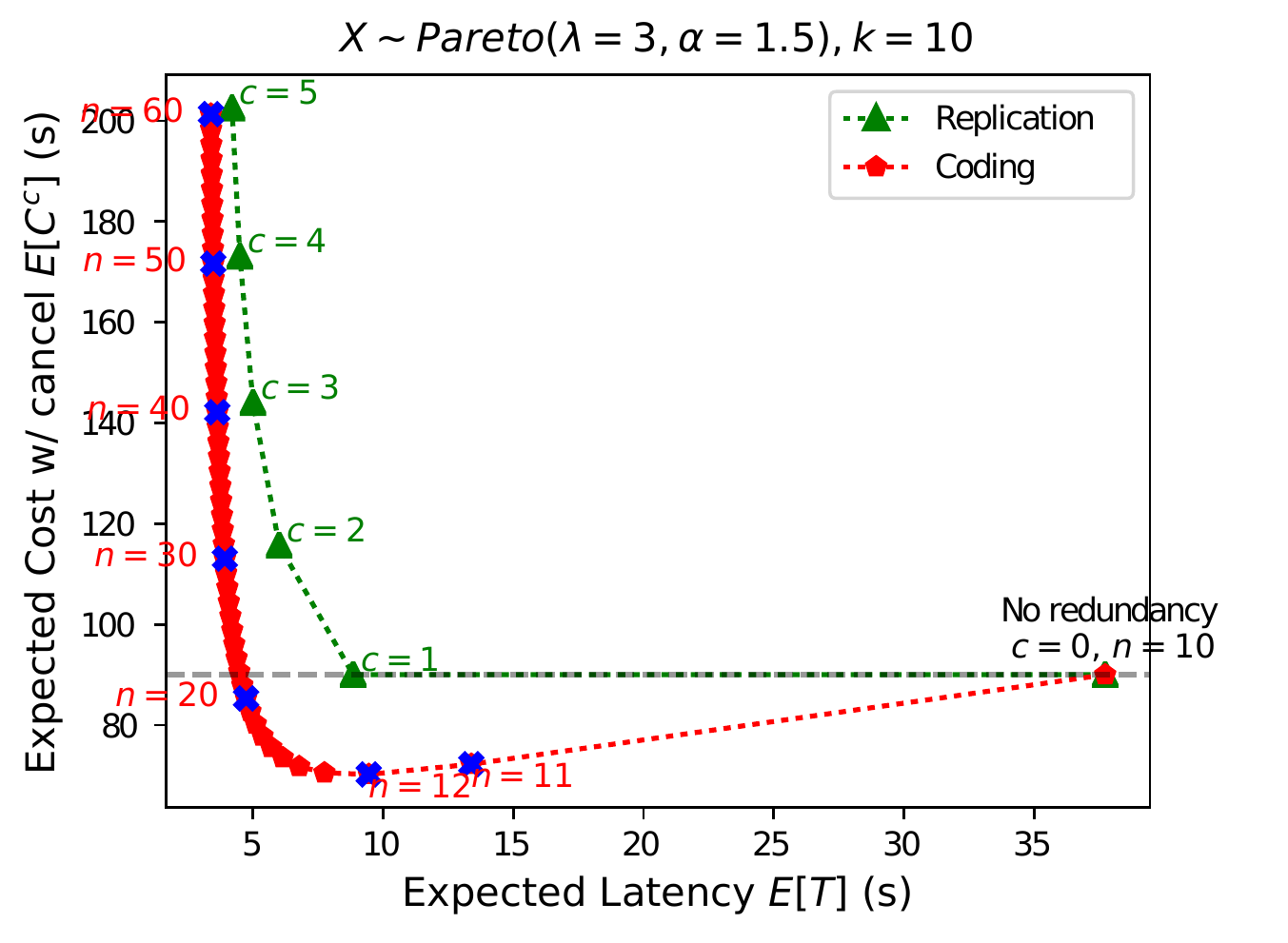}
  \end{subfigure}
  \begin{subfigure}[]{.32\textwidth}
    \centering
    \includegraphics[width=1\textwidth, keepaspectratio=true]{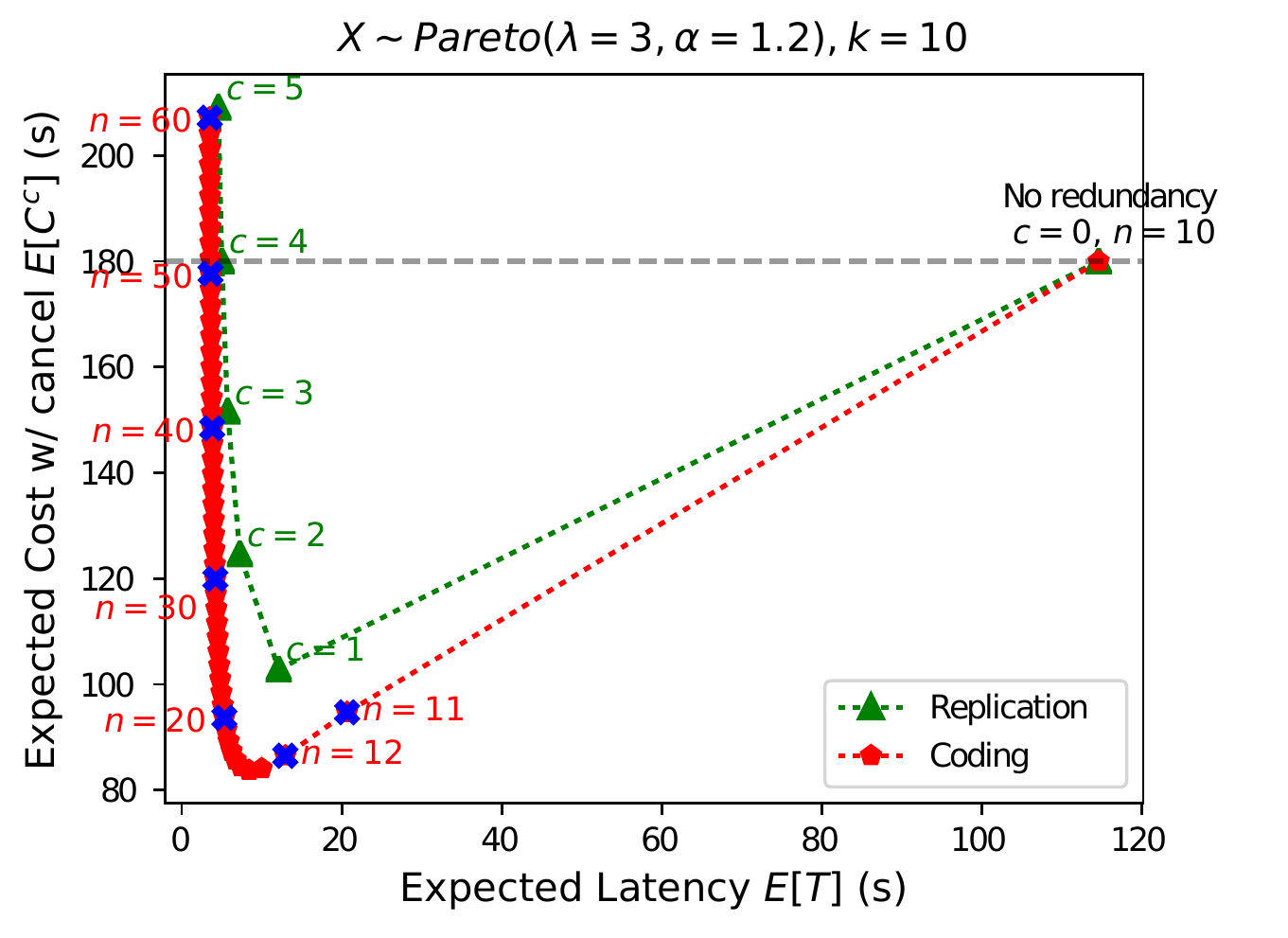}
  \end{subfigure}
  \caption{Expected cost vs.\ latency for zero-delay redundancy where redundancy levels $c$ and $n$ vary along the curves. Tail heaviness increases from left to right. The heavier the tail is, the higher the maximum reduction in expected cost and latency is.}
  \label{fig:plot_zerodelay_reped_vs_coded_k_10}
\end{figure*}

Thm.~\ref{thm_zerodelay_red_E_T__E_C} gives exact expressions for the expected cost and latency under zero-delay redundancy, and Fig.~\ref{fig:plot_zerodelay_reped_vs_coded_k_10} shows comparison between replication and coding for varying level of redundancy. Under both $SExp$ and $Pareto$ task execution times, coding always achieves better expected cost and latency than replication.
\begin{theorem}
  Let expected latency and cost with task cancellation be $T_{(k,c)}$, $C_{(k,c)}$ for zero-delay replicated redundancy, and $T_{(k,n)}$, $C_{(k,n)}$ for zero-delay coded redundancy. Under task execution time $X \sim SExp(\frac{D}{k}, \mu)$, we have
  \begin{equation*}
  \begin{split}
    E[T_{(k,c)}] &= \frac{D}{k} + \frac{H_k}{(c+1)\mu}, \quad\quad E[C_{(k,c)}] = (c+1)D + \frac{k}{\mu}, \\
    E[T_{(k,n)}] &= \frac{D}{k} + \frac{1}{\mu}(H_n-H_{n-k}), \quad E[C_{(k,n)}] = \frac{nD}{k} + \frac{k}{\mu}.
  \end{split}
  \label{eq:eq_zerodelay_reped_SExp_E_T__E_C}
  \end{equation*}
  
  Under task execution time $X \sim Pareto(\lambda, \alpha)$, we have
  \begin{equation*}
  \begin{split}
    E[T_{(k,c)}] &= \lambda k!\frac{\Gamma(1-((c+1)\alpha)^{-1})}{\Gamma(k+1-((c+1)\alpha)^{-1})}, \\
    E[C_{(k,c)}] &= \lambda k(c+1)\frac{(c+1)\alpha}{(c+1)\alpha-1}, \\
    E[T_{(k,n)}] &= \lambda\frac{n!}{(n-k)!}\frac{\Gamma(n-k+1-\alpha^{-1})}{\Gamma(n+1-\alpha^{-1})}, \\
    E[C_{(k,n)}] &= \lambda\frac{n}{\alpha-1}\Bigl[\alpha - \frac{\Gamma(n)}{\Gamma(n-k)}\frac{\Gamma(n-k+1-\alpha^{-1})}{\Gamma(n+1-\alpha^{-1})}
    \Bigr].
  \end{split}
  \label{eq:eq_zerodelay_reped_Pareto_E_T__E_C}
  \end{equation*}
  \label{thm_zerodelay_red_E_T__E_C}
\end{theorem}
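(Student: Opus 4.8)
The plan is to reduce all four pairs of formulas to expectations of order statistics of the iid task time $X$, and then to evaluate those expectations separately for $SExp$ and $Pareto$. For replication, the key observation is that a single task completes at the minimum of its $c+1$ launched copies, and this minimum stays in the same family: $\min$ of $c+1$ iid $SExp(D/k,\mu)$ is $SExp(D/k,(c+1)\mu)$ (the deterministic floor is unchanged while the exponential rates add), and $\min$ of $c+1$ iid $Pareto(\lambda,\alpha)$ is $Pareto(\lambda,(c+1)\alpha)$. Hence $T_{(k,c)}$ is the maximum of $k$ such minima, and — since all $c+1$ copies of a task stay alive until that task first completes and are then canceled — the cost is $C_{(k,c)}=(c+1)\sum_{i=1}^k M_i$ with $M_i$ the per-task minimum. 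For coding, the latency is directly the $k$-th order statistic $X_{(k:n)}$ of $n$ iid copies, and because the $n-k$ slowest tasks are canceled at job completion the cost is $C_{(k,n)}=\sum_{j=1}^k X_{(j:n)}+(n-k)X_{(k:n)}$. In both cases the shift/minimum ($D/k$, resp.\ $\lambda$) guarantees every copy runs at least that long, which is exactly what produces the clean deterministic terms $(c+1)D$ and $nD/k$.

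For $SExp$ I would pass to the residual exponential part after subtracting the common floor. The replicated latency then needs $E[\max \text{ of } k \text{ iid } Exp((c+1)\mu)]=H_k/((c+1)\mu)$, and the coded latency needs the exponential order-statistic mean $E[Y_{(k:n)}]=\mu^{-1}(H_n-H_{n-k})$; both follow from the spacing representation $Y_{(j:n)}-Y_{(j-1:n)}=E_j/\big(\mu(n-j+1)\big)$ with $E_j$ iid $Exp(1)$. The same spacings settle the costs: reading cost as the integral of the number of running copies, the coded exponential contribution is $\sum_{j=1}^k (n-j+1)\,E[Y_{(j:n)}-Y_{(j-1:n)}]=\sum_{j=1}^k \mu^{-1}=k/\mu$, while the replicated exponential contribution is $(c+1)\sum_i E[Z_i]=k/\mu$ with $Z_i\sim Exp((c+1)\mu)$. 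Adding the deterministic terms yields the four $SExp$ expressions.

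For $Pareto$ the engine is the substitution $u=F(x)=1-(\lambda/x)^\alpha$, which turns the order-statistic integral into a Beta integral and gives $E[X_{(j:n)}]=\lambda\,\frac{n!}{(n-j)!}\,\frac{\Gamma(n-j+1-\alpha^{-1})}{\Gamma(n+1-\alpha^{-1})}$ (finite precisely when the relevant index stays positive, i.e.\ $\alpha>1$). Setting $j=k$ gives $E[T_{(k,n)}]$ immediately. The replicated latency is the $j=k=n$ case applied to the scaled index $(c+1)\alpha$, and the replicated cost is $k(c+1)\,E[M_i]$ with $E[M_i]=\lambda\frac{(c+1)\alpha}{(c+1)\alpha-1}$ the mean of a $Pareto(\lambda,(c+1)\alpha)$; both are then immediate.

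The one genuinely nonroutine step is the coded $Pareto$ cost, where $\sum_{j=1}^k E[X_{(j:n)}]$ must be collapsed into closed form. Here I would introduce the auxiliary $G(m)=\Gamma(m+1-\alpha^{-1})/\Gamma(m)$ and verify the forward difference $G(m)-G(m+1)=\frac{1-\alpha}{\alpha}\cdot\frac{\Gamma(m+1-\alpha^{-1})}{m!}$, so that after the change of index $m=n-j$ each summand equals $\frac{\alpha}{1-\alpha}\,[G(m)-G(m+1)]$. The sum then telescopes between $m=n-k$ and $m=n-1$, leaving only $G(n-k)$ and $G(n)$; combining this with the canceled-task term $(n-k)E[X_{(k:n)}]$ and simplifying the coefficients (using $n!/\Gamma(n)=n$) produces exactly $\lambda\frac{n}{\alpha-1}\big[\alpha-\frac{\Gamma(n)}{\Gamma(n-k)}\frac{\Gamma(n-k+1-\alpha^{-1})}{\Gamma(n+1-\alpha^{-1})}\big]$. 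I expect guessing the telescoping antidifference $G$ to be the crux; everything after it is bookkeeping of Gamma ratios.
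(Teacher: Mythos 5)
Your proposal is correct and takes essentially the same route as the paper's (very brief) proof sketch: under replication each task's lifetime is the minimum of its $c+1$ copies, which remains in the same family with scaled parameter ($SExp(D/k,(c+1)\mu)$ or $Pareto(\lambda,(c+1)\alpha)$), and latency and cost reduce to order statistics of these minima, respectively of the $n$ coded tasks via $X_{n:k}$ and $\sum_{i=1}^k X_{n:i}+(n-k)X_{n:k}$. You simply carry out the order-statistic expectations that the paper leaves to ``first principles'' --- the R\'enyi spacing argument for the exponential part and the telescoping Gamma-ratio sum for the coded Pareto cost --- and these computations (including the forward difference of $G(m)=\Gamma(m+1-\alpha^{-1})/\Gamma(m)$ and the $(c+1)$ factor in the replicated cost, which the paper's sketch elides) check out.
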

\begin{proof}[Sketch]
  For replicated redundancy, lifetime of each task is $X_{c+1:1} \sim Pareto(\lambda, (c+1)\alpha)$. Then the expected values of latency $(X_{c+1:1})_{k:k}$ and cost $\sum_{i=1}^k (X_{c+1:1})_{k:i}$ follow from first principles of order statistics.
  Same calculations apply for expectation of latency $X_{n:k}$ and cost $\sum_{i=1}^k X_{n:i} + (n-k)X_{n:k}$ of execution with coded redundancy.
\end{proof}

Under exponential tail, adding redundancy reduces latency but increases cost. In \cite{wang2015using}, replicated redundancy is demonstrated to reduce both cost and latency under heavy tailed task execution time. Fig.~\ref{fig:plot_zerodelay_reped_vs_coded_k_10} plots latency and cost reduction for exponential and heavy tailed task execution times using the expressions given in Thm.~\ref{thm_zerodelay_red_E_T__E_C}. It illustrates the intuitive conclusion that redundancy can yield greater reduction in cost and latency under heavier tailed task execution times.

It is worth to discuss how and why cost reduction matters. Cost, as is defined here, reflects the amount of resource time used to execute a job. Reduction in cost then means running the same job by occupying less area in the space of overall system capacity, which then allows fitting more jobs per area, and hence higher system throughput. To illustrate with a simple example, consider a First-Come First-Served queue with single server under heavy traffic, that is, server never goes idle since there always exists a job to be served. Then, average throughput is the reciprocal of average job service time. In this case, cost of serving a job is simply its service time and system throughput increases with reduced average cost. In the case of systems with many servers that serve jobs with multiple tasks, same principles apply and cost reduction opens up space to execute more jobs per time unit.

Although exact expressions are formidable to derive, second moments of latency and cost can be computed as described in Thm.~\ref{thm_zerodelay_red_E_T_2__E_C_2}. Second moments enable us to compute the standard deviation of latency and cost. Fig.~\ref{fig:plot_zerodelay_reped_vs_coded_wstdev_k_15} plots expected cost and latency with error bars of width equal to the standard deviation in respective dimensions. Standard deviation, hence the variability of cost and latency decreases with increasing level of redundancy as expected. Reduction in variability for the same level of redundancy is greater with coding compared to using replication.

\begin{theorem}
  For $X \sim Exp(\mu)$ and $j \geq i$
  \begin{equation*}
    E[X_{n:i}X_{n:j}] = \frac{1}{\mu^2}[H_{n^2} - H_{(n-i)^2} + (H_n - H_{n-i})(H_n - H_{n-j})].
  \label{eq:eq_Exp_orderstat_joint_moment}
  \end{equation*}
  Then, under task execution time $X \sim SExp(D/k, \mu)$, second moments of latency and cost for zero-delay replicated and coded redundancy systems can be computed as
  \begin{equation*}
  \begin{split}
	E[T_{(k,c)}^2] &= \left(\frac{D}{k} + \frac{H_k}{(c+1)\mu}\right)^2 + \frac{H_{k^2}}{(c+1)^2\mu^2}, \\
    E[C_{(k,c)}^2] &= ((c+1)D)^2 + 2D(c+1)\frac{k}{\mu} \\
    &\quad + (c+1)^2 \sum_{i,j=1}^k E[Y_{n:i}Y_{n:j}] \\
    E[T_{(k,n)}^2] &= \frac{H_{n^2} - H_{(n-k)^2}}{\mu^2} + \left(\frac{D}{k} + \frac{H_n - H_{n-k}}{\mu}\right)^2, \\
    E[C_{(k,n)}^2] &= \left(n\frac{D}{k}\right)^2 + 2n\frac{D}{\mu} + (n-k)^2 E[X_{n:k}^2] \\
    &\quad + 2(n-k)\sum_{i=1}^k E[X_{n:i}X_{n:k}] + \sum_{i,j=1}^k E[X_{n:i}X_{n:j}].
  \end{split}
  \label{eq:eq_zerodelay_reped_red_E_T_2__E_C_2}
  \end{equation*}
  where $Y \sim Exp((c+1)\mu)$.
  
  \vspace{2ex}
  For $X \sim Pareto(\lambda, \alpha)$, given $\alpha > \max\{2(n-i+1)^{-1}, (n-j+1)^{-1}\}$ and $j \geq i$
  \begin{equation*}
  \begin{split}
    E[X_{n:i}X_{n:j}] = &\lambda^2\frac{n!}{\Gamma(n+1-2/\alpha)} \\
    &\times \frac{\Gamma(n-i+1-2/\alpha)}{\Gamma(n-i+1-1/\alpha)} \frac{\Gamma(n-j+1-2/\alpha)}{\Gamma(n-j+1)}.
  \end{split}
  \label{eq:eq_Pareto_orderstat_joint_moment}
  \end{equation*}
  Then, under task execution time $X \sim Pareto(\lambda, \alpha)$, second moments of latency and cost for zero-delay replicated and coded redundancy systems can be computed as
  \begin{equation*}
  \begin{split}
	E[T_{(k,c)}^2] &= E[Y_{k:k}^2], \\
    E[C_{(k,c)}^2] &= (c+1)^2 \sum_{i,j=1}^k E[Y_{k:i}Y_{k:j}], \\
    E[T_{(k,n)}^2] &= E[X_{n:k}^2] \\
    E[C_{(k,n)}^2] &= (n-k)^2 E[X_{n:k}^2] + 2(n-k)\sum_{i=1}^k E[X_{n:i}X_{n:k}] \\
    &\quad + \sum_{i,j=1}^k E[X_{n:i}X_{n:j}].
  \end{split}
  \label{eq:eq_zerodelay_red_Pareto_E_T_2__E_C_2}
  \end{equation*}
  where $Y \sim Pareto(\lambda, (c+1)\alpha)$.
  \label{thm_zerodelay_red_E_T_2__E_C_2}
\end{theorem}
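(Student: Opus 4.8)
The plan is to reduce the entire theorem to the two joint-moment identities for order statistics and then assemble each second moment by squaring the order-statistic representations already recorded in the sketch of Thm.~\ref{thm_zerodelay_red_E_T__E_C}. Recall that for coded redundancy the latency is $X_{n:k}$ and the cost with cancellation is $\sum_{i=1}^k X_{n:i} + (n-k)X_{n:k}$, while for replicated redundancy each block of $c+1$ replicas acts as one effective task whose completion time is the minimum $X_{c+1:1}$; hence the latency is $Y_{k:k}$ and the cost is $(c+1)\sum_{i=1}^k Y_{k:i}$ with $Y = X_{c+1:1}$, which is $Exp((c+1)\mu)$ in the $SExp$ case (after splitting off the shift $D/k$) and $Pareto(\lambda,(c+1)\alpha)$ in the Pareto case. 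Squaring each expression and using linearity of expectation turns every second moment into a linear combination of $E[X_{n:k}^2]$ and of pairwise products $E[X_{n:i}X_{n:j}]$, so the only genuinely new ingredients are the two joint-moment formulas and everything else is bookkeeping.

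For the exponential identity I would invoke R\'enyi's representation: if $X_1,\dots,X_n$ are iid $Exp(\mu)$ then $X_{n:i} = \frac{1}{\mu}\sum_{r=1}^i \frac{E_r}{n-r+1}$ with $E_r$ iid $Exp(1)$, since the normalized spacings are independent exponentials by the memoryless property. For $j \ge i$ the shared summands give $E[X_{n:i}X_{n:j}] = \frac{1}{\mu^2}\sum_{r=1}^i\sum_{s=1}^j \frac{E[E_rE_s]}{(n-r+1)(n-s+1)}$, and inserting $E[E_r^2]=2$ and $E[E_rE_s]=1$ for $r\neq s$ splits the double sum into the product $(H_n-H_{n-i})(H_n-H_{n-j})$ plus the diagonal correction $\sum_{r=1}^i (n-r+1)^{-2}=H_{n^2}-H_{(n-i)^2}$, which is precisely the stated formula. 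Specializing to $i=j=k$ (with rate $(c+1)\mu$ and $n=k$, or rate $\mu$ and general $n$) gives the latency terms $E[Y_{k:k}^2]$ and $E[X_{n:k}^2]$.

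The Pareto identity is the main obstacle, and I would linearize it by the log-transform: if $X\sim Pareto(\lambda,\alpha)$ then $X=\lambda e^{Y}$ with $Y\sim Exp(\alpha)$, so $E[X_{n:i}X_{n:j}]=\lambda^2 E[e^{Y_{n:i}+Y_{n:j}}]$ where the $Y_{n:i}$ are exponential order statistics. Applying R\'enyi's representation again writes $Y_{n:i}+Y_{n:j}$ (for $j\ge i$) as $\frac{1}{\alpha}\bigl(\sum_{r=1}^i \frac{2E_r}{n-r+1}+\sum_{r=i+1}^j \frac{E_r}{n-r+1}\bigr)$, and independence of the $E_r$ factorizes the expectation into a product of exponential moment generating functions $E[e^{tE_r}]=(1-t)^{-1}$. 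Each factor is $\frac{n-r+1}{\,n-r+1-2/\alpha\,}$ on the doubled block or $\frac{n-r+1}{\,n-r+1-1/\alpha\,}$ on the remaining block, and telescoping via $\prod_{m=a}^{b}(m-x)=\Gamma(b+1-x)/\Gamma(a-x)$ collapses the products into ratios of Gamma functions of the stated form. The convergence of each factor demands $2/(\alpha(n-r+1))<1$ on the first block and $1/(\alpha(n-r+1))<1$ on the second, whose binding constraints are exactly $\alpha>\max\{2(n-i+1)^{-1},(n-j+1)^{-1}\}$; this is the step most prone to off-by-one index slips, so I would verify it on the diagonal $i=j$, where the expression must reduce to the single order-statistic moment $E[X_{n:i}^2]=\lambda^2\frac{n!}{(n-i)!}\frac{\Gamma(n-i+1-2/\alpha)}{\Gamma(n+1-2/\alpha)}$.

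With both identities established, the four second-moment expressions follow by substitution. In the shifted-exponential cost I would keep each task time as $D/k$ plus its exponential part and expand $\bigl(\frac{nD}{k}+S\bigr)^2$ in the coded case or $\bigl((c+1)(D+\sum_i Y_i)\bigr)^2$ in the replicated case; the useful collapse $\sum_{i=1}^k X_{n:i}+(n-k)X_{n:k}=\frac1\mu\sum_{r=1}^k E_r$ pins the linear cross term at $2nD/\mu$, and the quadratic pieces $(n-k)^2E[X_{n:k}^2]$, $2(n-k)\sum_i E[X_{n:i}X_{n:k}]$ and $\sum_{i,j}E[X_{n:i}X_{n:j}]$ are read off from the exponential joint moment. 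In the Pareto case there is no shift, so the latency second moments are immediate and the cost second moments are the same three quadratic pieces evaluated with the Pareto joint-moment formula, together with the replicated analogue in which $Y\sim Pareto(\lambda,(c+1)\alpha)$. Beyond the two lemmas, no step requires anything more than algebra.
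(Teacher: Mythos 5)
Your overall architecture matches the paper's: the paper's proof is a two-line sketch that reads the second moments off the latency and cost representations from Thm.~\ref{thm_zerodelay_red_E_T__E_C} and cites Arnold's texts for the two joint-moment formulas, whereas you derive both formulas from scratch via R\'enyi's spacings representation (directly for the exponential, and after the log-transform $X=\lambda e^{Y}$ for the Pareto). That is a genuinely more self-contained route: it yields the condition $\alpha>\max\{2(n-i+1)^{-1},(n-j+1)^{-1}\}$ for free as the domain of convergence of the moment-generating factors, and the telescoping identity $\sum_{i=1}^k X_{n:i}+(n-k)X_{n:k}=\frac{1}{\mu}\sum_{r=1}^k E_r$ that you use to pin the cross term at $2nD/\mu$ is exactly the right way to organize the bookkeeping. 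The assembly of the four second moments from the squared representations is the same in both.

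One substantive point: carried out carefully, your Pareto computation does \emph{not} land on the formula as printed in the theorem. The doubled block contributes $\prod_{r=1}^{i}\frac{n-r+1}{n-r+1-2/\alpha}$ and the single block contributes $\prod_{r=i+1}^{j}\frac{n-r+1}{n-r+1-1/\alpha}$, which telescope to
\begin{equation*}
E[X_{n:i}X_{n:j}]=\lambda^2\frac{n!}{\Gamma(n+1-2/\alpha)}\,\frac{\Gamma(n-i+1-2/\alpha)}{\Gamma(n-i+1-1/\alpha)}\,\frac{\Gamma(n-j+1-1/\alpha)}{\Gamma(n-j+1)},
\end{equation*}
i.e.\ with $1/\alpha$ rather than $2/\alpha$ inside the $\Gamma(n-j+1-\cdot)$ factor. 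The diagonal check you propose confirms this: at $i=j$ your version collapses to $E[X_{n:i}^2]=\lambda^2\frac{n!}{(n-i)!}\frac{\Gamma(n-i+1-2/\alpha)}{\Gamma(n+1-2/\alpha)}$ (and to $\lambda^2\alpha/(\alpha-2)$ when $n=i=1$), while the printed formula carries a spurious factor $\Gamma(n-i+1-2/\alpha)/\Gamma(n-i+1-1/\alpha)$. Your version also agrees with the cited reference and with the stated condition on $\alpha$, which only forces $n-j+1-1/\alpha>0$ rather than $n-j+1-2/\alpha>0$. So your derivation is correct and the theorem statement contains a typo; you should say so explicitly rather than asserting that your products collapse ``to the stated form.''
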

\begin{proof}[Sketch]
  Derivations follow from latency and cost expressions given in the proof of Thm.~\ref{thm_zerodelay_red_E_T__E_C}. Expressions of $E[X_{n:i}X_{n:j}]$ can be found for $X \sim {\it Exp}$ in Pg. 73 of \cite{OrderStat:Arnold08}, and for $X \sim {\it Pareto}$ in Pg. 62 of \cite{Pareto:Arnold15}.
\end{proof}

Under heavy tail, it is possible to reduce latency by adding redundancy and still pay for the baseline cost of running with no redundancy. Corollary \ref{cor_zerodelay_red_pareto_reduc_in_E_T_for_baseline_E_C} gives expressions for the minimum achievable expected latency $E[T_{\min}]$ without exceeding the baseline cost of running with no redundancy.

\begin{figure}[t]
  \centering
  \includegraphics[width=0.36\textwidth, keepaspectratio=true]{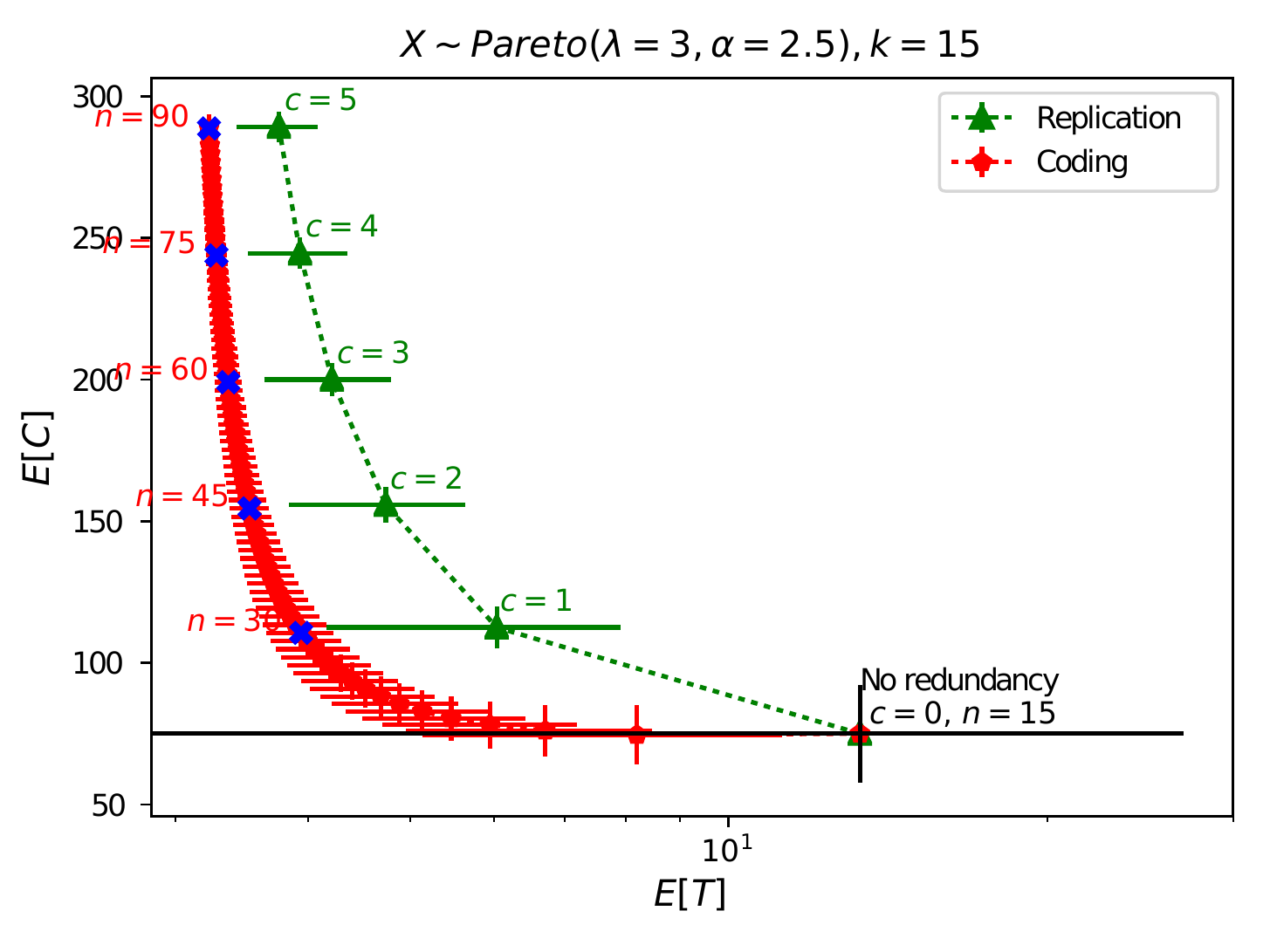}
  \caption{Expected cost vs.\ latency for zero-delay redundancy systems. Widths of horizontal error bars are equal to standard deviation of latency and widths of vertical bars are equal to standard deviation of cost.}
  \label{fig:plot_zerodelay_reped_vs_coded_wstdev_k_15}
\end{figure}

\begin{corollary}
  Under task execution time $X \sim \textit{Pareto}(\lambda, \alpha)$ in zero-delay replicated redundancy system, minimum latency $E[T_{\min}]$ that can be achieved without exceeding the baseline cost is,
  \begin{equation}
    E[T_{\min}] = \lambda k!\frac{\Gamma(1-(\alpha(c_{\max}+1))^{-1})}{\Gamma(k+1-(\alpha(c_{\max}+1))^{-1})}.
  \label{eq:eq_zerodelay_reped_pareto_reduc_in_E_T_for_base_E_C}
  \end{equation}
  where $c_{\max} = \max\{\floor*{\frac{1}{\alpha-1}}-1, 0\}$ and any reduction in latency without exceeding the baseline cost is possible only if $\alpha < 1.5$.
  For coded redundancy system,
  \begin{equation}
  \begin{split}
    E[T_{\min}] = E[T(n_{\max})].
  \end{split}
  \label{eq:eq_zerodelay_coded_pareto_reduc_in_E_T_for_base_E_C}
  \end{equation}
  where
  \begin{equation}
  \begin{split}
    n_{\max} &= \max\{n | E[T(n_{\max})] - \frac{E[T(k)]}{(n_{\max}-k)} - \alpha \leq 0 \}, \\
    E[T(n)] &= \lambda\frac{n!}{(n-k)!}\frac{\Gamma(n-k+1-\alpha^{-1})}{\Gamma(n+1-\alpha^{-1})}.
  \end{split}
  \end{equation}
  or an upper bound on $E[T_{\min}]$ is
  \begin{equation}
    E[T_{\min}] < \lambda\alpha + \lambda k!\frac{\Gamma(1-\alpha^{-1})}{\Gamma(k+1-\alpha^{-1})}.
  \label{eq:eq_zerodelay_coded_pareto_reduc_in_E_T_for_base_E_C__ineq}
  \end{equation}
\label{cor_zerodelay_red_pareto_reduc_in_E_T_for_baseline_E_C}
\end{corollary}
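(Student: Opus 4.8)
The plan is to exploit that both latency expressions in Thm.~\ref{thm_zerodelay_red_E_T__E_C} are monotonically decreasing in the redundancy level ($c$ for replication, $n$ for coding), whereas the baseline corresponds to no redundancy, i.e. $c=0$ or $n=k$, for which the cost in both cases equals $\lambda k\alpha/(\alpha-1)$. Consequently, minimizing latency without exceeding the baseline cost amounts to pushing the redundancy level as high as the cost budget permits and then reading off the latency at that level. So the whole problem reduces to solving the single inequality $E[C]\le\lambda k\alpha/(\alpha-1)$ for the largest admissible integer redundancy, separately for the two schemes.

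For replication I would substitute $E[C_{(k,c)}]=\lambda k(c+1)\frac{(c+1)\alpha}{(c+1)\alpha-1}$ and compare against its $c=0$ value. Writing $m=c+1$ and clearing the (positive) denominators, the inequality collapses, after cancelling the common factor $m-1$, to the clean linear condition $m\le 1/(\alpha-1)$, i.e. $c\le\frac{1}{\alpha-1}-1$. The largest feasible integer is therefore $c_{\max}=\max\{\lfloor\frac{1}{\alpha-1}\rfloor-1,0\}$, and plugging this into $E[T_{(k,c)}]$ yields \eqref{eq:eq_zerodelay_reped_pareto_reduc_in_E_T_for_base_E_C}. A genuine reduction (i.e. $c_{\max}\ge 1$) requires $1/(\alpha-1)\ge 2$, which is essentially the tail-index threshold $\alpha\le 1.5$ quoted in the statement; the boundary case $\alpha=1.5$ meets the budget with equality while still strictly shrinking latency, which is the source of the strict-versus-nonstrict nuance.

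For coding the same comparison $E[C_{(k,n)}]\le\lambda k\alpha/(\alpha-1)$ looks more forbidding, but the apparent difficulty is cosmetic: using the recurrence $\Gamma(x+1)=x\Gamma(x)$ one rewrites $n\frac{\Gamma(n)}{\Gamma(n-k)}=\frac{\Gamma(n+1)}{\Gamma(n-k)}=(n-k)\frac{n!}{(n-k)!}$, so the bracketed cost term is exactly $(n-k)E[T(n)]/\lambda$. The $(n-k)$ factors then cancel and the constraint collapses to $E[T(n)]\ge\lambda\alpha$. Since $E[T(n)]$ is decreasing in $n$, the feasible set is an interval $[k,n_{\max}]$ whose right endpoint is $n_{\max}=\max\{n : E[T(n)]\ge\lambda\alpha\}$, giving $E[T_{\min}]=E[T(n_{\max})]$.

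For the explicit upper bound I would use the defining boundary $E[T(n_{\max}+1)]<\lambda\alpha$ and control the single step $E[T(n_{\max})]-E[T(n_{\max}+1)]$. Telescoping $E[T(k)]-E[T(n_{\max})]=\sum_{j=k}^{n_{\max}-1}(E[T(j)]-E[T(j+1)])$ over its $n_{\max}-k$ terms and using that the decrements are nonincreasing in $n$ (which follows from the explicit ratio $E[T(n+1)]/E[T(n)]=1-\frac{k/\alpha}{(n-k+1)(n+1-1/\alpha)}$), the last step is at most the average, so $E[T(n_{\max})]-E[T(n_{\max}+1)]\le\frac{E[T(k)]-E[T(n_{\max})]}{n_{\max}-k}<\frac{E[T(k)]}{n_{\max}-k}$. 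Adding $E[T(n_{\max}+1)]<\lambda\alpha$ produces the characterizing inequality for $n_{\max}$, and bounding $n_{\max}-k\ge 1$ then yields $E[T_{\min}]<\lambda\alpha+E[T(k)]$ with $E[T(k)]=\lambda k!\Gamma(1-\alpha^{-1})/\Gamma(k+1-\alpha^{-1})$. The main obstacle is the Gamma-function bookkeeping that turns the coding cost constraint into the transparent form $E[T(n)]\ge\lambda\alpha$, together with verifying that the latency decrements are monotone so that the averaging step, and hence the closed-form bound, is valid.
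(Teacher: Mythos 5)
Your proof is correct, and its replication half is essentially the paper's own argument: both reduce the problem to solving $E[C_c]<E[C_{c=0}]$, which with $m=c+1$ factors as $(m-1)\bigl(m(\alpha-1)-1\bigr)<0$ and gives $c_{\max}=\max\{\floor*{\frac{1}{\alpha-1}}-1,0\}$ and the $\alpha<1.5$ threshold (the strict-versus-nonstrict boundary issue you flag is real but immaterial). For the coding half you take a genuinely different, and in fact more rigorous, route. The paper rewrites $E[C]$ as a function of $E[T]$ and then simply asserts that $E[C_{n_{\max}}]<E[C_{n=k}]$ ``holds only if'' $E[T(n_{\max})]<\lambda\alpha+E[T(k)]/(n_{\max}-k)$ without deriving it; moreover its defining set for $n_{\max}$ is self-referential and, read literally, unbounded, since $E[T(n)]\to\lambda<\lambda\alpha$ makes the stated condition hold for all large $n$. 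Your derivation instead shows that the cost constraint is \emph{exactly} the threshold condition $E[T(n)]\geq\lambda\alpha$: the identity $\Gamma(n)/\Gamma(n-k)=\frac{n-k}{n}\cdot\frac{n!}{(n-k)!}$ turns the cost into $\frac{n\lambda\alpha-(n-k)E[T(n)]}{\alpha-1}$, the $(n-k)$ factors cancel against the baseline, and monotonicity of $E[T(n)]$ makes the feasible set an interval, so $E[T_{\min}]=E[T(n_{\max})]$ with $n_{\max}$ the last $n$ above the threshold. You then recover the paper's inequality as a provable consequence, combining $E[T(n_{\max}+1)]<\lambda\alpha$ with the fact that the decrements $E[T(n)]-E[T(n+1)]=E[T(n)]\cdot\frac{k/\alpha}{(n+1-k)(n+1-1/\alpha)}$ are nonincreasing, so the last one is bounded by the average $\frac{E[T(k)]-E[T(n_{\max})]}{n_{\max}-k}$; setting $n_{\max}-k\geq1$ gives the final upper bound \eqref{eq:eq_zerodelay_coded_pareto_reduc_in_E_T_for_base_E_C__ineq} exactly as the paper does. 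The net effect is that your argument supplies the missing justification for the paper's asserted intermediate inequality and exposes the cleaner characterization $E[T(n)]\geq\lambda\alpha$ that the paper's statement obscures.
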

\begin{proof}
  First consider replicated redundancy $(k,c)$-system. Latency is a decreasing function of $c$, while cost may decrease up to $c_{\max}$ beyond which it increases with $c$. We would like to find $c_{\max}$ such that $E[C_{c=c_{\max}}] < E[C_{c=0}] < E[C_{c=c_{\max}+1}]$, then $E[T_{\min}]$ is simply $E[T_{c=c_{\max}}]$. We next obtain the range of $c$ for which $E[C_c] < E[C_{c=0}]$.
  \begin{equation*}
  \begin{split}
    E[C_c] < E[C_{c=0}] = \lambda k(\frac{(c+1)^2\alpha}{(c+1)\alpha-1} - \frac{\alpha}{\alpha-1}) < 0.
  \end{split}
  \end{equation*}
  which holds only when $c < \frac{1}{\alpha-1}-1$, and since $c$ is a non-negative integer, $c_{\max} = \max\{\floor*{\frac{1}{\alpha-1}}-1, 0\}$, from which  \eqref{eq:eq_zerodelay_reped_pareto_reduc_in_E_T_for_base_E_C} follows.
  
  Next consider a coded redundancy $(k,n)$-system. Expressions for the expected cost in this case do not allow to find $n_{\max}$ such that $E[C_{n=n_{\max}}] < E[C_{n=k}] < E[C_{n=n_{\max}+1}]$. Instead we can express $E[C]$ as a function of $E[T]$ as $E[C] = \lambda\frac{n}{\alpha-1}(\alpha - \frac{n-k}{n\lambda}E[T])$. We can then find an approximation for $E[T_{\min}]$ directly by relating $E[C_{n_{\max}}]$ to $E[C_{n=k}]$.\\ $E[C_{n_{\max}}] < E[C_{n=k}]$ holds only if $E[T_{n_{\max}}] < \lambda\alpha + \frac{E[T_{n=k}]}{n_{\max}-k}$, from which \eqref{eq:eq_zerodelay_coded_pareto_reduc_in_E_T_for_base_E_C} follows. Upper bound \eqref{eq:eq_zerodelay_coded_pareto_reduc_in_E_T_for_base_E_C__ineq} follows from this by setting $n_{\max} = k+1$.
\end{proof}

Fig.~\ref{fig:plot_zerodelay_red_pareto_reduc_in_E_T_for_baseline_E_C} illustrates that the maximum percentage reduction in latency (i.e., $\frac{E[T_0]-E[T_{\min}]}{E[T_0]}$; $E[T_{\min}]$ is latency without exceeding the baseline cost, $E[T_0]$ is the latency with no redundancy) depends on the tail of task execution time. As stated in Corollary \ref{cor_zerodelay_red_pareto_reduc_in_E_T_for_baseline_E_C}, this reduction is possible under replicated redundancy only when the tail index is less than $1.5$, in other words when the tail is quite heavy, while coding relaxes this constraint. Moreover, the threshold on $\alpha$ under replication is independent of the number of tasks $k$, while threshold increases with $k$ under coding, meaning that jobs with larger number of tasks can get reduction in latency at no cost even for lighter tailed task execution times.
\begin{figure}[h]
  \centering
  \includegraphics[width=0.35\textwidth, keepaspectratio=true]{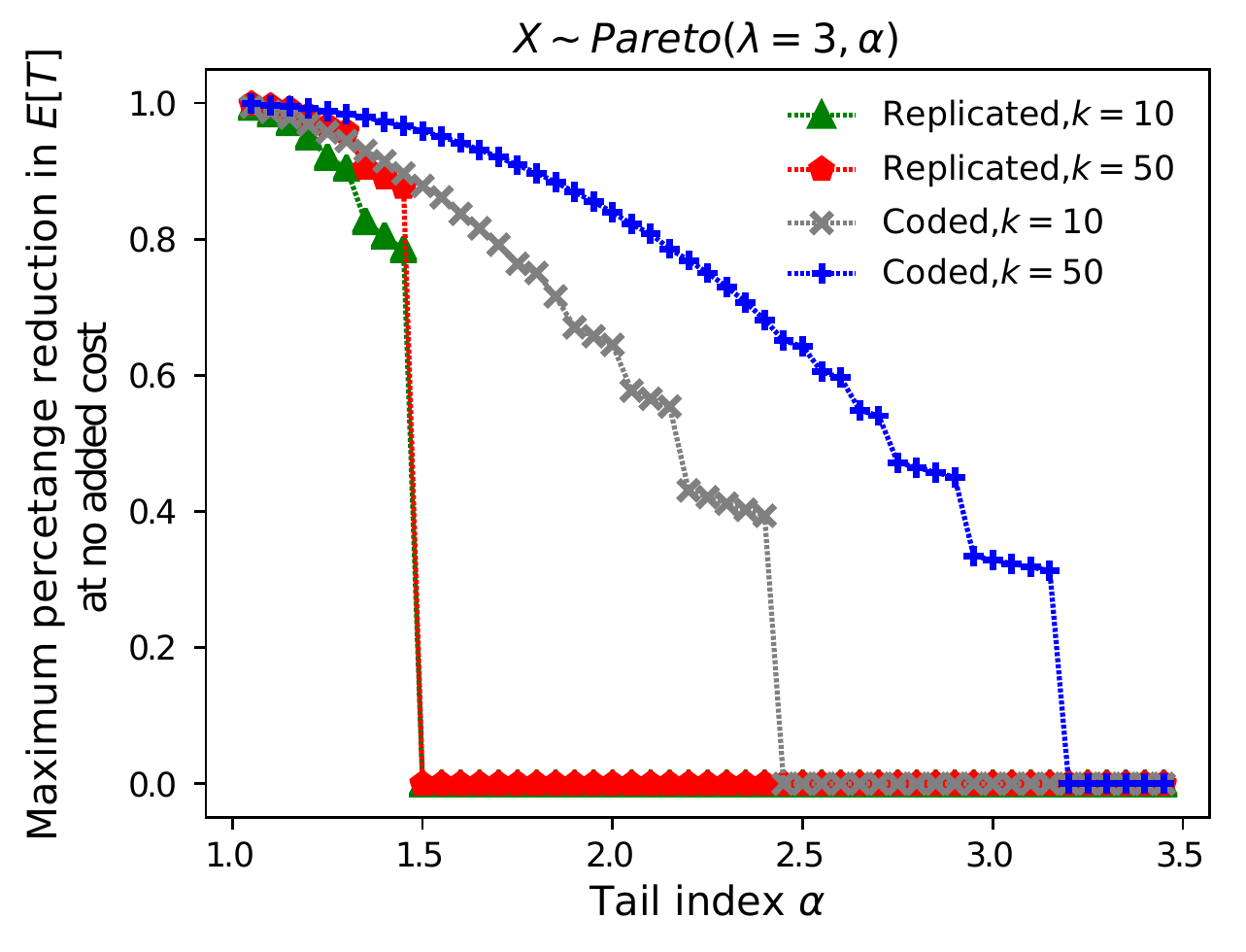}
  \caption{Maximum percentage reduction in expected latency without exceeding the baseline cost of running with no redundancy depends on the tail of task execution time. Latency reduction at no cost is possible for replicated redundancy only if the tail index is below $1.5$ while this constraint is relaxed under coding.}
  \label{fig:plot_zerodelay_red_pareto_reduc_in_E_T_for_baseline_E_C}
\end{figure}

\vspace{1ex}
\noindent\textbf{Demonstration using Google traces:} We simulated expected cost and latency of redundancy systems by using the empirical task execution time distributions that we constructed using Google Trace data \cite{reiss2011google}. Fig.~\ref{fig:plot_reped_vs_coded_Google} plots cost and latency curves using the three empirical distributions plotted in Fig.~\ref{fig:plot_google_empiricaltail} for jobs with $k=15, 400, 1050$ tasks.

For all three distributions, coding is doing better than replication in cost vs. latency tradeoff. Cost and latency can be reduced together with redundancy for all distributions because they pronounce a tail heavier than Exponential at large values. Note that although replication cannot reduce cost below the baseline cost of running with no redundancy, coding can achieve this for $k=15, 1050$. Also for $k=400$, although replication seems to achieve less cost and latency for low redundancy levels, coding outperforms replication beyond a certain level of redundancy.
\begin{figure*}[t]
  \centering
  \begin{subfigure}[]{.32\textwidth}
    \centering
    \includegraphics[width=1\textwidth, keepaspectratio=true]{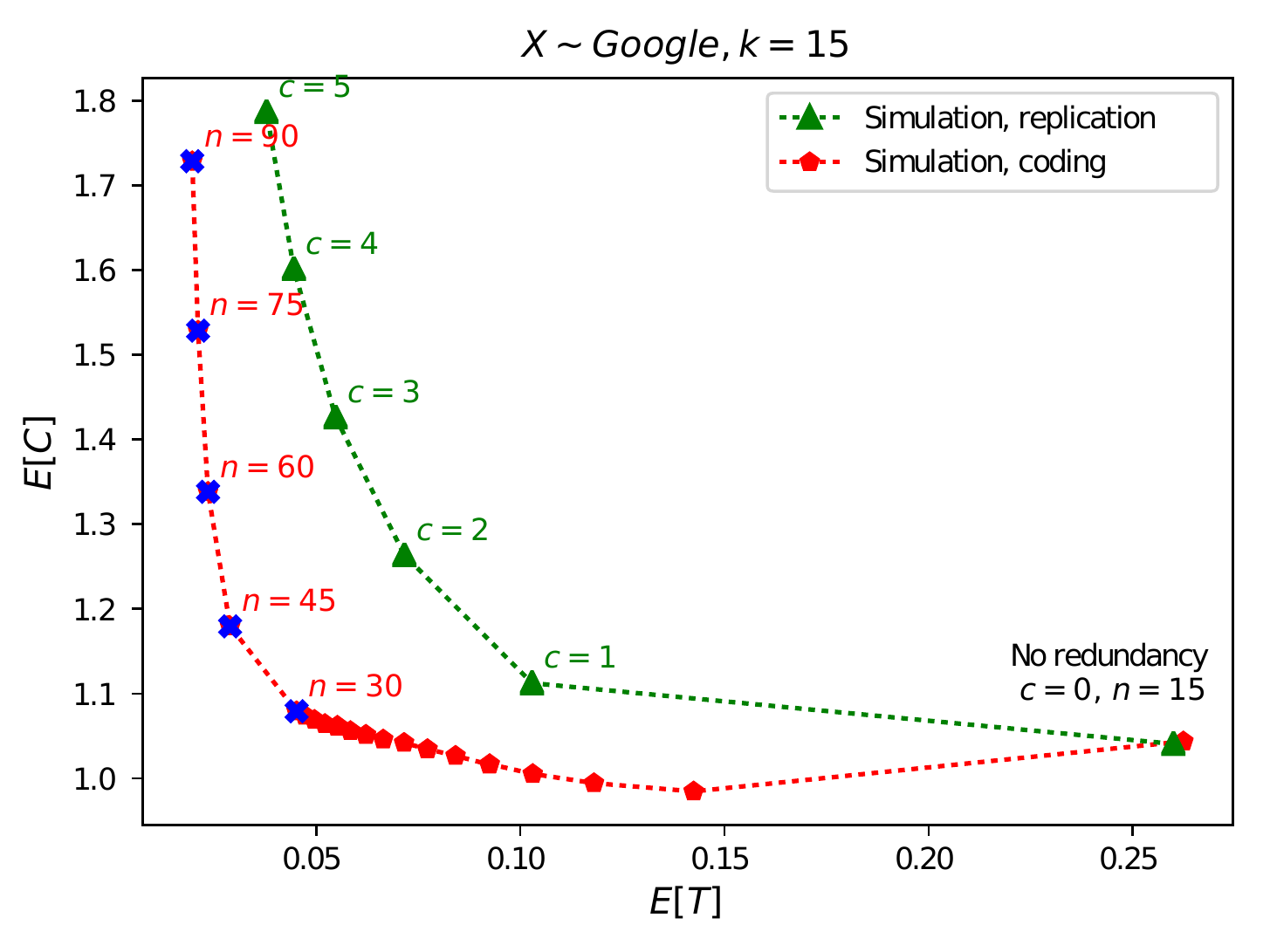}
  \end{subfigure}
  \begin{subfigure}[]{.32\textwidth}
    \centering
    \includegraphics[width=1\textwidth, keepaspectratio=true]{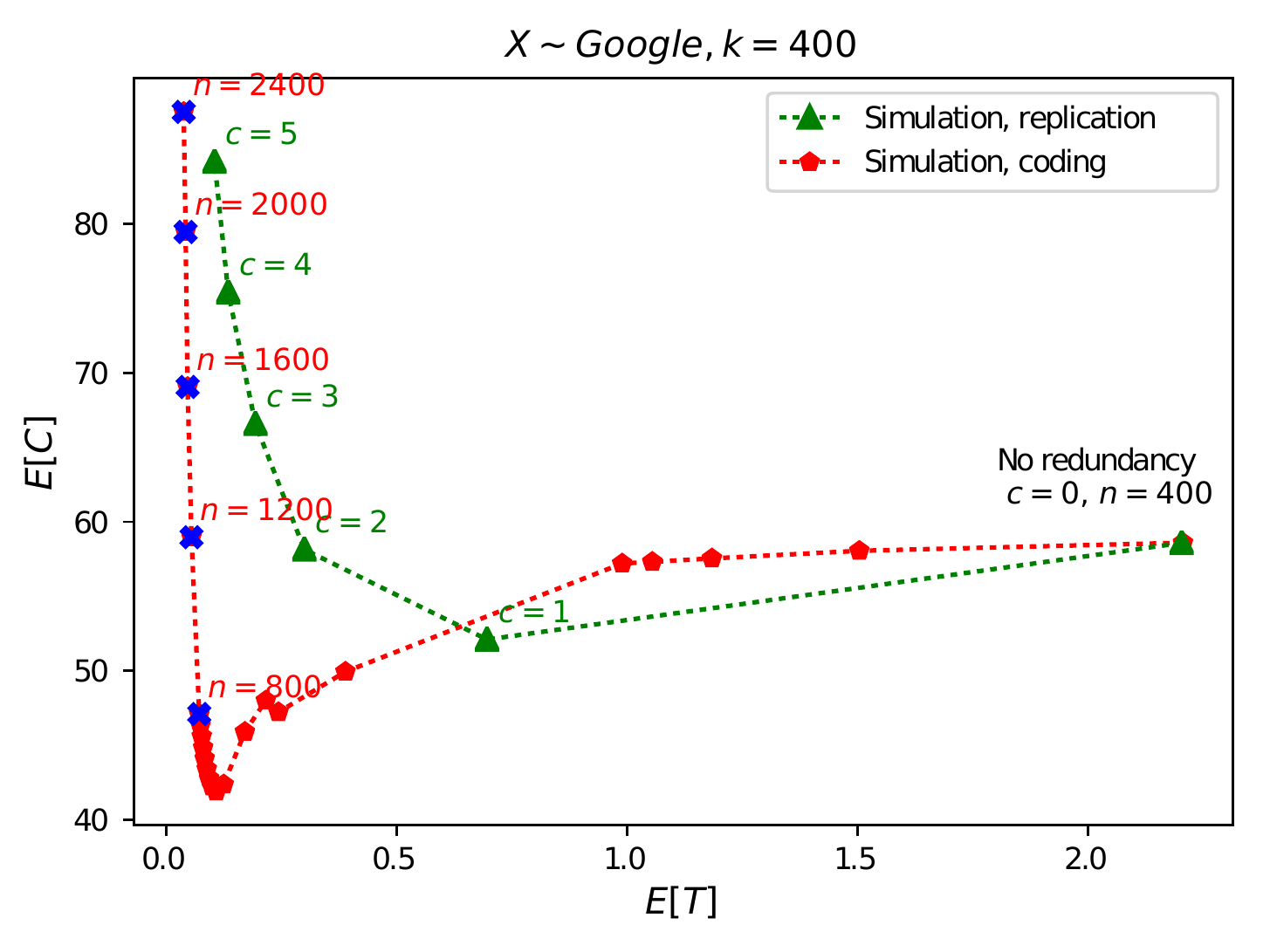}
  \end{subfigure}
  \begin{subfigure}[]{.32\textwidth}
    \centering
    \includegraphics[width=1\textwidth, keepaspectratio=true]{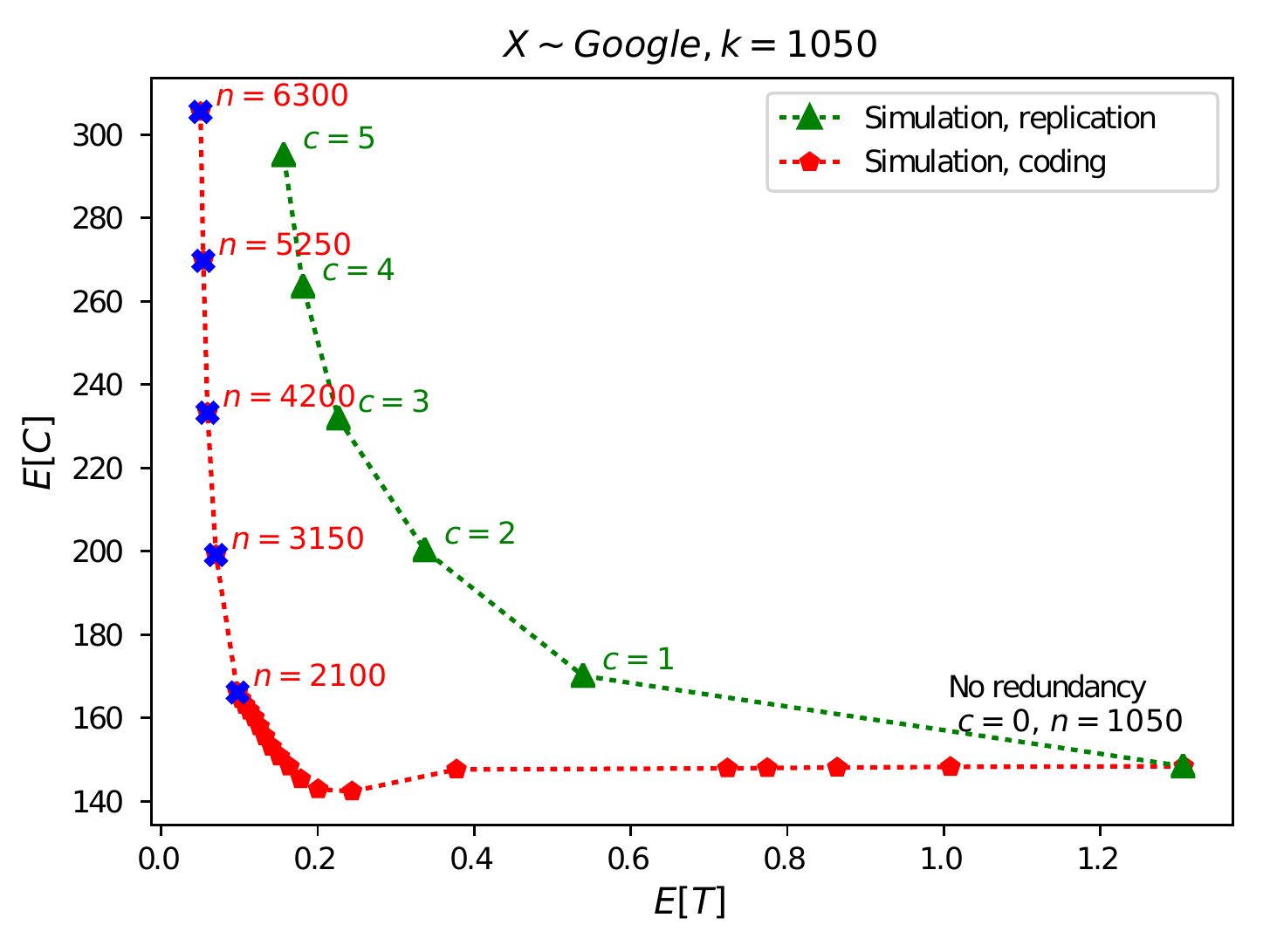}
  \end{subfigure}
  \caption{Expected cost vs.\ latency in zero-delay redundancy systems using empirical task execution time distribution for jobs with number of tasks $k=15, 400, 1050$ from Google Trace data.}
  \label{fig:plot_reped_vs_coded_Google}
\end{figure*}

\section{Straggler Relaunch}
There are two properties of heavy tailed distributions (e.g., {\it Pareto}) that greatly affect the latency of distributed job computation \cite{crovella2001performance}. First, if task execution times are heavy tailed, the longer a task has taken to execute, the longer its residual lifetime is expected to be. Second, majority of the mass in a set of sample observations drawn from a heavy tailed distribution is contributed by only a few samples. This suggests that among many heavy tailed tasks, very few of them are expected to be stragglers with very long completion time compared to the non-stragglers.

When task execution times are heavy tailed, after non-straggler tasks finish, we are expected to wait even longer for job completion since the remaining straggler tasks are expected to take at least as much more time as the non-stragglers have. This suggests that killing straggler tasks and launching fresh replacements can reduce the job completion time.

In this section, we study relaunching remaining tasks after some delay, and show that it can yield significant reduction in cost and latency when the tail of task execution time is heavy enough. We discuss about the level of tail heaviness required for relaunching to be useful. In the system we study, selection of tasks to relaunch is decided by the time relaunching is performed. Untimely relaunch may cause reduction in gain or even pain by either late relaunch and delayed cancellation of stragglers or early relaunch and killing non-stragglers. We present an approximate optimal relaunch time given the distribution of task execution times. Lastly, cost and latency of adding redundancy together with task relaunch is discussed.

\vspace{1ex}
\noindent
\textbf{No redundancy with relaunch:}
Thm.~\ref{thm_k_wrelaunch_E_T} gives exact expressions for expected latency and cost in no-redundancy systems, in which tasks that did not complete by time $\Delta$ are relaunched without introducing any redundancy. Note that we assume that relaunching takes place instantly upon cancellation and thus, it does not incur additional delay or cost. Fig.~\ref{fig:fig_k_wrelaunch_E_T__E_C_a} compares the latency with and without relaunch in a no-redundancy system. Relaunching tasks before the minimum task completion time $\lambda$ causes work loss and increases latency, while relaunching at the right time gives significant reduction in job completion time. Since cost is a direct function of latency in the absence of redundancy, reduction in latency will certainly reduce cost, as plotted in Fig.~\ref{fig:fig_k_wrelaunch_E_T__E_C_b}. Notice that relaunching all tasks at the beginning ($\Delta=0$) or not relaunching at all ($\Delta \rightarrow \infty$) implements the identical behavior and gives the identical cost and latency.
\begin{theorem}
  Under task execution time $X \sim Pareto(\lambda, \alpha)$ in a no-redundancy system (i.e., $c=0$ or $n=k$) with relaunch, the tail probability  of a job completion time is
  \begin{align}
    Pr\{T > t\} &= 1 - \Bigl\{\mathbbm{1}(t > \lambda)\Bigl[1 - \Bigl(\frac{\lambda}{t}\Bigl)^{\alpha}\Bigr]\Bigr\}^k
    \nonumber \\
    &+ \Bigl\{q + \mathbbm{1}(t > \Delta)\Bigl[1 - \Bigl(\frac{\Delta}{t}\Bigr)^{\alpha}\Bigr](1-q)\Bigr\}^k \label{eq:eq_k_wrelaunch_Pr_T_g_t}\\
    &+ \Bigl\{q + \mathbbm{1}(t > \Delta+\lambda)\Bigl[1 -\Bigl (\frac{\lambda}{t-\Delta}\Bigl)^{\alpha}\Bigr](1-q)\Bigr\}^k.
    \nonumber
  \end{align}
  
  The expected job completion time is
  \begin{equation}
  \begin{split}
    E[T] =
    \begin{cases}
      \Delta + g(k,\alpha) & \Delta \leq \lambda, \\
      \begin{split}
        & \Delta(1-q^k) + \\
        & g(k,\alpha)\Bigr[\Bigl(\frac{\lambda}{\Delta}-1)I(1-q;1-\alpha^{-1},k\Bigr) + 1\Bigr]
      \end{split} & o.w.
    \end{cases}
  \end{split}
  \label{eq:eq_k_wrelaunch_E_T}
  \end{equation}
  
  The expected cost with ($E[C^c]$) and without ($E[C]$) task cancellation is
  \begin{equation}
  \begin{split}
    E[C] &=
    \begin{cases}
      k\Delta + \frac{k\lambda}{1-\alpha^{-1}} & \Delta \leq \lambda, \\
      \frac{\alpha}{\alpha-1}[k\lambda(2-q)] - \frac{k\Delta(1-q)}{\alpha-1} & o.w.
    \end{cases} \\
    E[C^c] &=
    \begin{cases}
      k\Delta + \frac{1}{\alpha-1}[k\lambda\alpha - g(k,\alpha)] & \Delta \leq \lambda, \\
      \frac{\alpha}{\alpha-1}[k(1-q)(\lambda-\Delta) + k\lambda]+ k(1-q)\Delta & o.w.
    \end{cases}
  \end{split}
  \end{equation}
  where $q = \mathbbm{1}(\Delta > \lambda)(1 - (\frac{\lambda}{\Delta})^{\alpha})$ and $g(k,\alpha) = \lambda k!\frac{\Gamma(1-\alpha^{-1})}{\Gamma(k+1-\alpha^{-1})}$, which is the expected job completion time without relaunch.
  \label{thm_k_wrelaunch_E_T}
\end{theorem}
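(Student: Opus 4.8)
The plan is to reduce everything to a single task's completion time and then invoke order statistics. First I would observe that, because task times are i.i.d.\ and a task is relaunched exactly when its own original copy exceeds $\Delta$ (and whenever that happens the job is certainly still running at $\Delta$, so the global relaunch trigger is automatically met), the per-task relaunch decisions decouple. Hence the latency is the maximum of $k$ i.i.d.\ copies of the single-task completion time $\tau$, where $\tau=X$ if $X\le\Delta$ and $\tau=\Delta+X'$ otherwise, with $X,X'$ independent $\textit{Pareto}(\lambda,\alpha)$. I would then write down the c.d.f.\ $G$ of $\tau$ piecewise: $G(t)=1-(\lambda/t)^\alpha$ on $\lambda\le t\le\Delta$; a flat stretch $G(t)=q$ on $\Delta<t<\Delta+\lambda$ (no relaunched copy can finish before $\Delta+\lambda$); and $G(t)=q+(1-q)(1-(\lambda/(t-\Delta))^\alpha)$ for $t\ge\Delta+\lambda$. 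Since $\Pr\{T>t\}=1-G(t)^k$, grouping these pieces into the three indicator-gated terms of \eqref{eq:eq_k_wrelaunch_Pr_T_g_t} and checking that in each region two of them coincide and cancel recovers the stated tail.

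For the expected latency I would use $E[T]=\int_0^\infty \Pr\{T>t\}\,dt$. When $\Delta\le\lambda$ no original can finish by $\Delta$, so every task is relaunched and $T=\Delta+X_{k:k}$; the mean is then $\Delta+g(k,\alpha)$ directly from the max-of-Pareto expectation used in Thm.~\ref{thm_zerodelay_red_E_T__E_C}. When $\Delta>\lambda$ I would split the integral across the four regions above. The piece below $\lambda$ and the flat piece contribute simple $\Delta$- and $q$-dependent terms, while the two heavy-tailed pieces are evaluated by the substitutions $u=(\lambda/t)^\alpha$ and $u=(\lambda/(t-\Delta))^\alpha$, which convert Pareto tails into incomplete-Beta integrals. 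Collecting the constant part into $\Delta(1-q^k)$ and identifying the special-function part with $g(k,\alpha)$ and the regularized incomplete Beta $I(1-q;1-\alpha^{-1},k)$ yields \eqref{eq:eq_k_wrelaunch_E_T}.

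The cost follows from linearity of expectation once each task's lifetime is decomposed. A task finishing by $\Delta$ contributes its truncated lifetime with weight $q$; a relaunched task contributes $\Delta$ for its cancelled original copy plus the lifetime of its fresh copy, with weight $1-q$. For $E[C]$ (no cancellation) the fresh copy runs to completion with mean $E[X]=\lambda\alpha/(\alpha-1)$; summing the three contributions and simplifying with $\Delta(1-q)=\lambda(\lambda/\Delta)^{\alpha-1}$ gives the stated $E[C]$. For $E[C^c]$ the fresh copies still outstanding at the job-completion time are stopped, so each relaunched lifetime must be capped at $T$; this is where the individual lifetimes become coupled to the global maximum and the computation is driven through the Pareto order statistics (the same joint-moment machinery behind Thm.~\ref{thm_zerodelay_red_E_T_2__E_C_2}). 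The difference $E[C]-E[C^c]$ is exactly the expected work the last-finishing relaunched task would have performed past completion, which is what produces the $-g(k,\alpha)/(\alpha-1)$ correction.

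The main obstacle is the with-cancellation cost: capping at the random time $T$ replaces a sum of independent per-task terms by a sum of order-statistic functionals, so one must integrate against the joint law of $\textit{Pareto}$ order statistics rather than treat tasks independently, and keep track of the incomplete-Beta tails throughout. The reduction of the latency integral to the regularized incomplete Beta (a substitution followed by integration by parts and standard Beta identities) is the only other place requiring care; the remaining steps are routine applications of linearity and the order-statistic facts already invoked in Thm.~\ref{thm_zerodelay_red_E_T__E_C}.
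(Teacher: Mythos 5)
Your derivation of the tail and of $E[T]$ is correct and is, in substance, the paper's own argument: the authors condition on $R\sim \textit{Binomial}(k,q)$, the number of tasks finished by $\Delta$, and your expression $\Pr\{T>t\}=1-G(t)^k$ with the per-task mixture c.d.f.\ $G$ is exactly the unexpanded form of that binomial sum, so nothing genuinely different is happening there. One detail you should not have glossed over: your ``two of the three terms coincide and cancel'' check only goes through if the third brace in \eqref{eq:eq_k_wrelaunch_Pr_T_g_t} enters with a \emph{minus} sign --- as printed, with two plus signs, the right-hand side equals $1+2q^k>1$ for $t<\lambda$ --- so your method actually exposes a sign typo in the displayed formula rather than ``recovering'' it. The $E[C]$ computation by linearity (truncated lifetime with weight $q$, plus $\Delta$ plus a fresh mean lifetime with weight $1-q$) is also correct and reproduces the stated expression.

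The genuine gap is your treatment of $E[C^c]$. In a no-redundancy system every task (or its relaunched copy) must finish for the job to finish, so at the completion time $T$ there are no outstanding copies to cancel: there is no capping at $T$, no coupling to the global maximum, and no need for the joint order-statistic machinery of Thm.~\ref{thm_zerodelay_red_E_T_2__E_C_2}. The mechanism you propose --- ``the expected work the last-finishing relaunched task would have performed past completion'' --- is zero by definition, since that task is precisely the one that determines $T$. The paper's own formulas bear this out: for $\Delta>\lambda$ the stated $E[C^c]$ and $E[C]$ are algebraically identical. The $-g(k,\alpha)/(\alpha-1)$ term appears only in the $\Delta\le\lambda$ branch, where all $k$ tasks are relaunched and the system is simply a fresh no-redundancy job started at time $\Delta$, so $E[C^c]=E[C]=k\Delta+k\lambda\alpha/(\alpha-1)$ ought to hold there too; that extra term looks like an artifact carried over from a redundancy setting, not something a correct cancellation argument can produce. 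You should have flagged this inconsistency rather than inventing a coupling that does not exist in the model; as written, this part of your proposal would not yield a valid derivation of the stated $E[C^c]$.
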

\begin{proof}[Sketch]
  Defining random variable $R$ as the number of tasks completed before $\Delta$, $R \sim Binomial(k, q)$ where $q = \mathbbm{1}(\Delta > \lambda)(1 - (\frac{\lambda}{\Delta})^{\alpha})$. Derivation of the tail, and expected latency and cost follows from the law of total probability or expectation by conditioning on $R$.
\end{proof}

An approximation for the optimal relaunch delay $\Delta^*$ that achieves the lowest latency and cost is given in Corollary \ref{cor_k_wrelaunch_E_T__opt_d_suff_a}. Convergence of this approximation to the true optimal is at a rate exponential with increasing $k$. Approximate $\Delta^*$ is very close to true optimal for $k=100$ as shown in Fig.~\ref{fig:fig_k_wrelaunch_E_T__E_C}. The optimal relaunch delay is an increasing function of minimum task execution time $\lambda$ and number of tasks, which intuitively makes sense. Also, it is a decreasing function of $\alpha$, meaning that it is better to relaunch earlier when the tail of task execution times is lighter, while for heavy tail, delaying relaunch further helps to identify stragglers and performs better in terms of cost and latency. This is because relaunching is a choice of canceling work that is already completed to get possibly lucky and execute fresh copies much faster than the canceled stragglers. Expected gain from relaunching under light tail is less than that under heavier tail since heavier tailed stragglers are expected to take much longer. Therefore, when the tail is light, it is better to try our chance with relaunch earlier and decrease amount of work loss with task cancellation.
\begin{corollary}
  Under task execution time $ Pareto(\lambda, \alpha)$ in a no-redundancy system, a sufficient condition on $\alpha$, which guarantees that expected cost and latency can be reduced by relaunching tasks at some time $\Delta$ is
  \begin{equation}
    \alpha < \frac{\ln(k+1)}{\ln(4)}.
  \label{eq:eq_k_wrelaunch_suff_a}
  \end{equation}
  Optimal relaunch time to achieve minimum cost and latency is approximated as
  \begin{equation}
    \Delta^* \approx \lambda\sqrt{\frac{k!\Gamma(1-\alpha^{-1})}{\Gamma(k+1-\alpha^{-1})}}
  \label{eq:eq_k_wrelaunch_optimal_d}
  \end{equation}
  which implies that optimal fraction of tasks to relaunch on average can be approximated as
  \begin{equation}
    p^* \approx \frac{\Gamma(1-\alpha^{-1})^{-\alpha/2}}{\sqrt{k+1}}.
  \label{eq:eq_k_wrelaunch_optimal_fraction}
  \end{equation}
  Upper bound on $\alpha$, and approximations for $\Delta^*$ and $p^*$ get tighter as the number of tasks $k$ increases.
  \label{cor_k_wrelaunch_E_T__opt_d_suff_a}
\end{corollary}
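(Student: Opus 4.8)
The plan is to start from the exact latency expression for the $\Delta>\lambda$ branch in Thm.~\ref{thm_k_wrelaunch_E_T} and replace it by a mean-field approximation that is tractable to optimize. At the relaunch instant the number $R$ of already-completed tasks is $\mathrm{Binomial}(k,q)$ with $q=1-(\lambda/\Delta)^\alpha$, and the job then finishes at $\Delta$ plus the completion time of the $k-R$ freshly relaunched tasks. Since a Binomial concentrates on its mean, I would replace $R$ by $kq$, so that on average a fraction $p=1-q=(\lambda/\Delta)^\alpha$ of tasks is relaunched and
\[
E[T(\Delta)] \approx \Delta + g(kp,\alpha),
\]
where $g(m,\alpha)=\lambda\,\Gamma(m+1)\Gamma(1-\alpha^{-1})/\Gamma(m+1-\alpha^{-1})$ is the expected completion time of $m$ fresh Pareto tasks. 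This is exactly where the incomplete-Beta term $I(1-q;1-\alpha^{-1},k)$ of the exact formula is bypassed; the error is governed by the concentration of $R$ and hence decays exponentially in $k$, which is the claimed tightening.

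Next I would use the large-argument asymptotic $g(m,\alpha)\approx\lambda\,\Gamma(1-\alpha^{-1})\,m^{1/\alpha}$, so that $g(kp,\alpha)\approx g(k,\alpha)\,p^{1/\alpha}=g(k,\alpha)\,\lambda/\Delta$ (using $p^{1/\alpha}=\lambda/\Delta$). The objective then collapses to $f(\Delta)=\Delta+\lambda\,g(k,\alpha)/\Delta$, whose unique minimizer is the geometric mean $\Delta^{*}=\sqrt{\lambda\,g(k,\alpha)}$, i.e.\ precisely \eqref{eq:eq_k_wrelaunch_optimal_d} once $g(k,\alpha)=\lambda k!\,\Gamma(1-\alpha^{-1})/\Gamma(k+1-\alpha^{-1})$ is substituted. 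I would check $\Delta^{*}>\lambda$ (valid since $g(k,\alpha)>\lambda$), confirming the $\Delta>\lambda$ branch was the correct one. The optimal fraction follows immediately as $p^{*}=(\lambda/\Delta^{*})^\alpha=(\lambda/g(k,\alpha))^{\alpha/2}$, and inserting $g(k,\alpha)\approx\lambda\,\Gamma(1-\alpha^{-1})(k+1)^{1/\alpha}$ yields \eqref{eq:eq_k_wrelaunch_optimal_fraction}.

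For the sufficient condition, evaluate the minimum: $f(\Delta^{*})=2\sqrt{\lambda\,g(k,\alpha)}$. Relaunching helps exactly when this beats the no-relaunch latency $g(k,\alpha)$ (recall both $\Delta=0$ and $\Delta\to\infty$ give $g(k,\alpha)$), i.e.\ when $2\sqrt{\lambda g(k,\alpha)}<g(k,\alpha)$, which rearranges to the clean threshold $g(k,\alpha)>4\lambda$. To turn this into a condition on $\alpha$ alone, I would lower-bound $g(k,\alpha)\ge\lambda(k+1)^{1/\alpha}$ using $\Gamma(1-\alpha^{-1})\ge1$ together with a Gautschi-type estimate of $\Gamma(k+1)/\Gamma(k+1-\alpha^{-1})$; then $(k+1)^{1/\alpha}>4$, i.e.\ $\alpha<\ln(k+1)/\ln4$, forces $g(k,\alpha)>4\lambda$ and guarantees a reduction. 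Since cost is a monotone function of latency in the no-redundancy setting, the same $\Delta^{*}$ and the same condition deliver the cost reduction.

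I expect the Gamma-function bookkeeping to be the main obstacle. Both the replacement of $R$ by $kq$ and the power-law asymptotic $g(m,\alpha)\approx\lambda\,\Gamma(1-\alpha^{-1})\,m^{1/\alpha}$ are exact only in the $k\to\infty$ limit, so all three claims are approximations whose accuracy must be tied to the concentration of the Binomial. The delicate step is establishing the clean inequality $g(k,\alpha)\ge\lambda(k+1)^{1/\alpha}$ that produces exactly the factor $4^\alpha<k+1$ rather than a messier bound; it is here that $\Gamma(1-\alpha^{-1})\ge1$ must be used to offset the slight deficit of the Gamma ratio below $(k+1)^{1/\alpha}$.
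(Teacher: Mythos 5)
Your proposal is correct and follows essentially the same route as the paper's proof: both reduce the exact latency to the approximate objective $\Delta + \lambda g(k,\alpha)/\Delta$, minimize it to obtain $\Delta^* = \sqrt{\lambda g(k,\alpha)}$, read off $p^* = (\lambda/\Delta^*)^\alpha$, and derive the threshold from $2\sqrt{\lambda g(k,\alpha)} < g(k,\alpha)$ combined with $\Gamma(k+1)/\Gamma(k+1-\alpha^{-1}) \approx (k+1)^{1/\alpha}$ and $\ln\Gamma(1-\alpha^{-1}) \ge 0$. The only difference is how the approximate objective is justified---you use Binomial concentration plus the power-law asymptotic of $g(m,\alpha)$, whereas the paper simply sets the incomplete-Beta factor $I(1-q;1-\alpha^{-1},k)\approx 1$ in the exact expression of Thm.~\ref{thm_k_wrelaunch_E_T}---which is the same large-$k$ simplification in different clothing.
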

\begin{proof}
  Approximation for $\Delta^*$ follows from approximation $I(1-q, 1-\alpha^{-1}, k) \approx 1$ for large $k$ and then minimizing expected latency given in eq.~\eqref{eq:eq_k_wrelaunch_E_T} by taking derivative with respect to $\Delta$.
  
  Approximation for $p^*$ follows by observing that number of tasks completed before $\Delta^*$ is $R \sim Binomial(k, q^*)$ where $q^* = 1 - (\frac{\lambda}{\Delta^*})^{\alpha}$. Then, average fraction of tasks that are relaunched is $p^* = 1 - q^* = (\frac{\lambda}{\Delta^*})^{\alpha}$ in which we can use approximation $\Delta^* \approx \lambda\sqrt{\Gamma(1 - \alpha^{-1})(k+1)^{\alpha^{-1}}}$ for large $k$.
  
  Secondly, we show the sufficient condition on $\alpha$ to be able to reduce $E[T]$ with relaunch. Let $T_{norel}$ be job completion time in no-redundancy system with no relaunch, then $E[T_{norel}] = g(k,\alpha)$ where $g(k,\alpha) = \lambda\frac{k!\Gamma(1-\alpha^{-1})}{\Gamma(k+1-\alpha^{-1})}$.
  \begin{equation}
  \begin{split}
    E[T - T_{norel}] &= \Delta(1-q^k) \\
    &\quad - (1-\frac{\lambda}{\Delta})g(k,\alpha)I(1-q;1-\alpha^{-1},k) \\
    &\approx \Delta - (1-\frac{\lambda}{\Delta})g(k,\alpha).
  \end{split}
  \end{equation}
  This difference is smallest when $\Delta = \Delta^*$ for which we will use the approximate discussed above. We are interested in the maximum value of $\alpha$ that would allow maximum possible difference $E[T - T_{norel}]$ to be negative as
  \begin{equation*}
  \begin{split}
    \Delta^* - (1-\frac{\lambda}{\Delta^*})g(k,\alpha) &< 0 \\
    2\sqrt{\lambda g(k,\alpha)} - g(k,\alpha) &< 0 \\
    \frac{\Gamma(k+1)\Gamma(1-\alpha^{-1})}{\Gamma(k+1-\alpha^{-1})} &> 4 \\
    \ln(\frac{\Gamma(k+1)}{\Gamma(k+1-\alpha^{-1})}) + \ln(\Gamma(1-\alpha^{-1})) &> 4 \\
    \frac{\ln(k+1)}{\alpha} + \ln(\Gamma(1-\alpha^{-1})) &\stackrel{(a)}{>} \ln(4).
  \end{split}
  \end{equation*}
  where $(a)$ is by using the approximation $\frac{\Gamma(k+1)}{\Gamma(k+1-\alpha^{-1})} \approx (k+1)^{1/\alpha}$ for large $k$. For $\alpha \geq 1$, which is what we assume since it is a requirement for finite expected latency, $\ln(\Gamma(1-\alpha^{-1})) > 0$ holds, and so sufficient condition given in eq.~\eqref{eq:eq_k_wrelaunch_suff_a} follows.
\end{proof}

Expression for optimal delay $\Delta^*$ tells us that it is better to relaunch earlier when task execution times have lighter tail. However, relaunching earlier does not mean that more tasks will be relaunched. Fraction of tasks $p^*$ that are relaunched after optimal delay $\Delta^*$ monotonically decreases with $\alpha$~\footnote{$p^*$ is a monotonically decreasing function of $\alpha$. For very heavy tail i.e., $\lim_{\alpha \rightarrow 1} \Gamma(1-\alpha^{-1})^{-\alpha/2} = 1$, for very light tail i.e., $\lim_{\alpha \rightarrow \infty} \Gamma(1-\alpha^{-1})^{-\alpha/2} \approx 0.749$.}, i.e., as the tail gets lighter. Notice that $p^*$ decreases with $k$, which means for jobs with larger number of tasks, optimal strategy dictates relaunching smaller fraction of the tasks. For instance, suppose $\alpha=2$ and $k=100$, then $p^* \approx 0.06$, in other words, only $6\%$ of the tasks would need to be relaunched on average for optimal latency and cost.

Note that we assume relaunching takes place instantly and does not introduce any cost. Adding relaunch cost into the analysis, which we leave as a future work, could make the analysis more realistic and also give more insight in searching for optimal relaunch strategy in practice.

\captionsetup[subfigure]{labelformat=empty}
\begin{figure}[h]
  \centering
  \begin{subfigure}[]{.35\textwidth}
    \centering
    \includegraphics[width=1\textwidth, keepaspectratio=true]{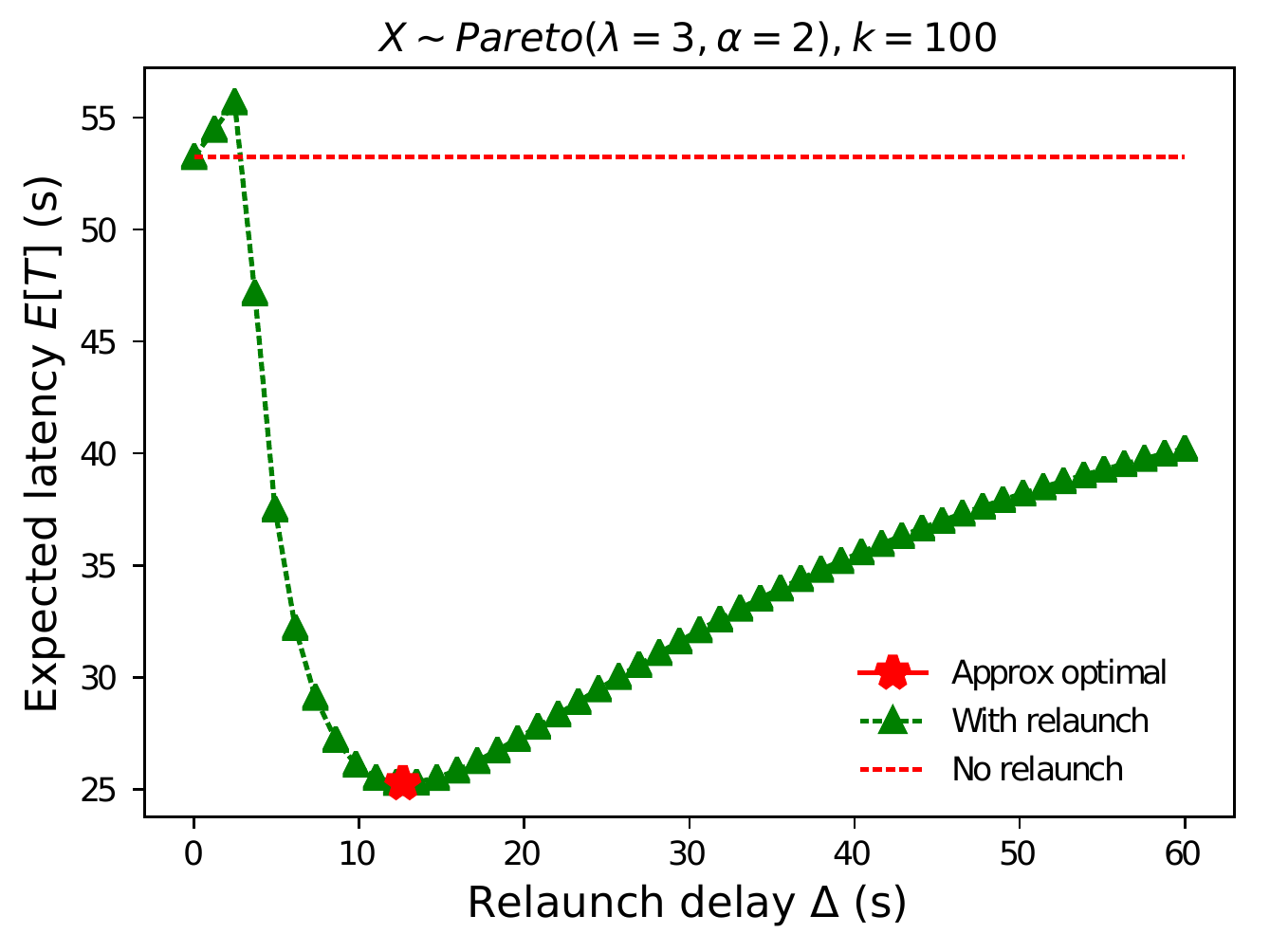}
    \caption{}
    \label{fig:fig_k_wrelaunch_E_T__E_C_a}
  \end{subfigure}
  \begin{subfigure}[]{.35\textwidth}
    \centering
    \includegraphics[width=1\textwidth, keepaspectratio=true]{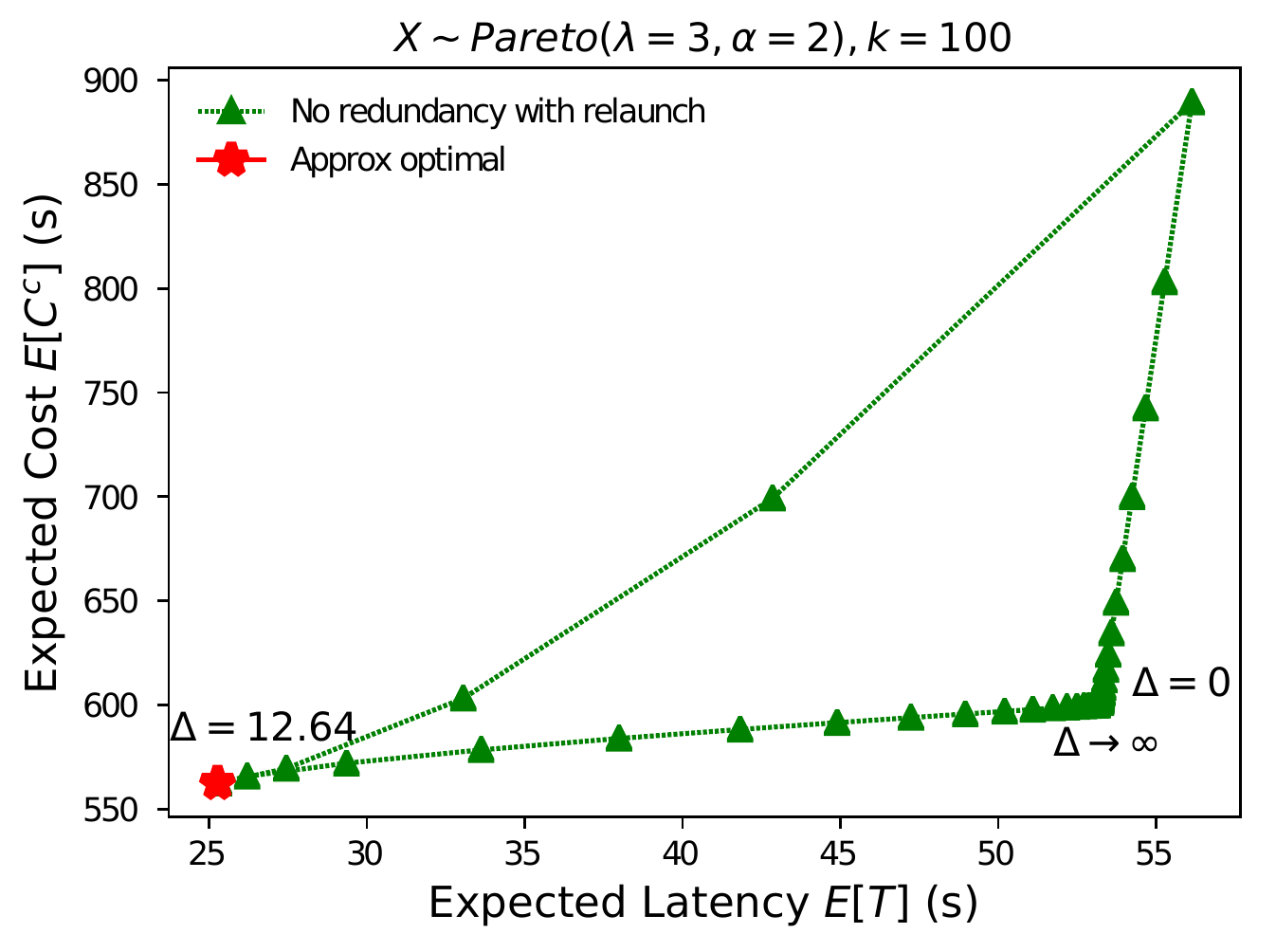}
    \caption{}
    \label{fig:fig_k_wrelaunch_E_T__E_C_b}
  \end{subfigure}
  \caption{(Top) Expected latency in no-redundancy system with and without relaunch in terms of relaunch delay $\Delta$. (Bottom) Expected cost vs. latency curve for no-redundancy system with relaunch, along which the value of delay $\Delta$ is varied.}
  \label{fig:fig_k_wrelaunch_E_T__E_C}
\end{figure}


\begin{figure}[b]
  \centering
  \includegraphics[width=0.35\textwidth, keepaspectratio=true]{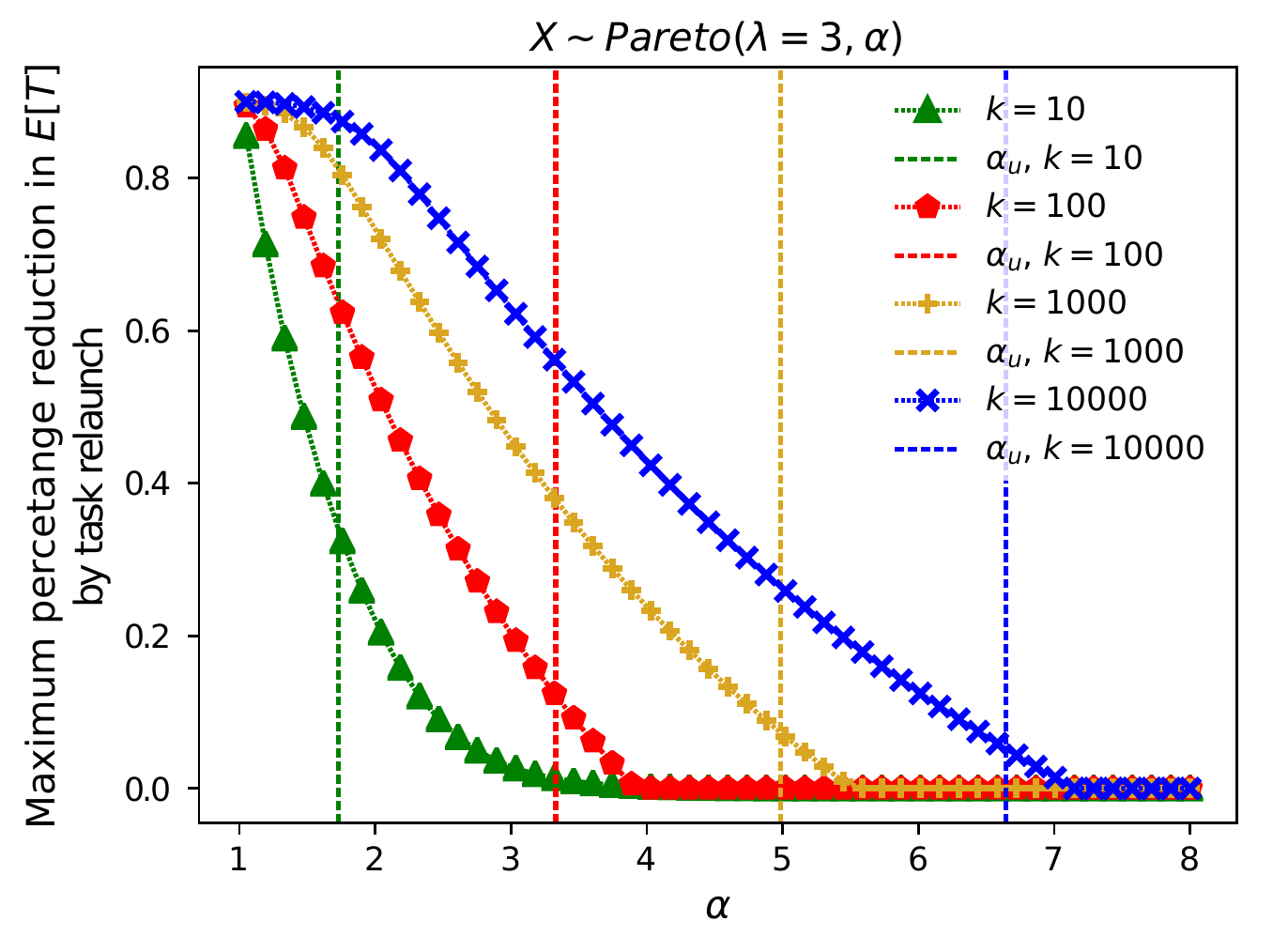}
  \caption{Maximum percentage reduction in expected latency by task relaunch depends on the tail of the task execution time. $\alpha_u$ is the upper bound on $\alpha$ given as the sufficient condition in Corollary \ref{cor_k_wrelaunch_E_T__opt_d_suff_a}.}
  \label{fig:fig_k_wrelaunch__reduc_in_E_T_vs_a}
\end{figure}

\begin{figure*}[t]
  \centering
  \begin{subfigure}[]{.32\textwidth}
    \centering
    \includegraphics[width=1\textwidth, keepaspectratio=true]{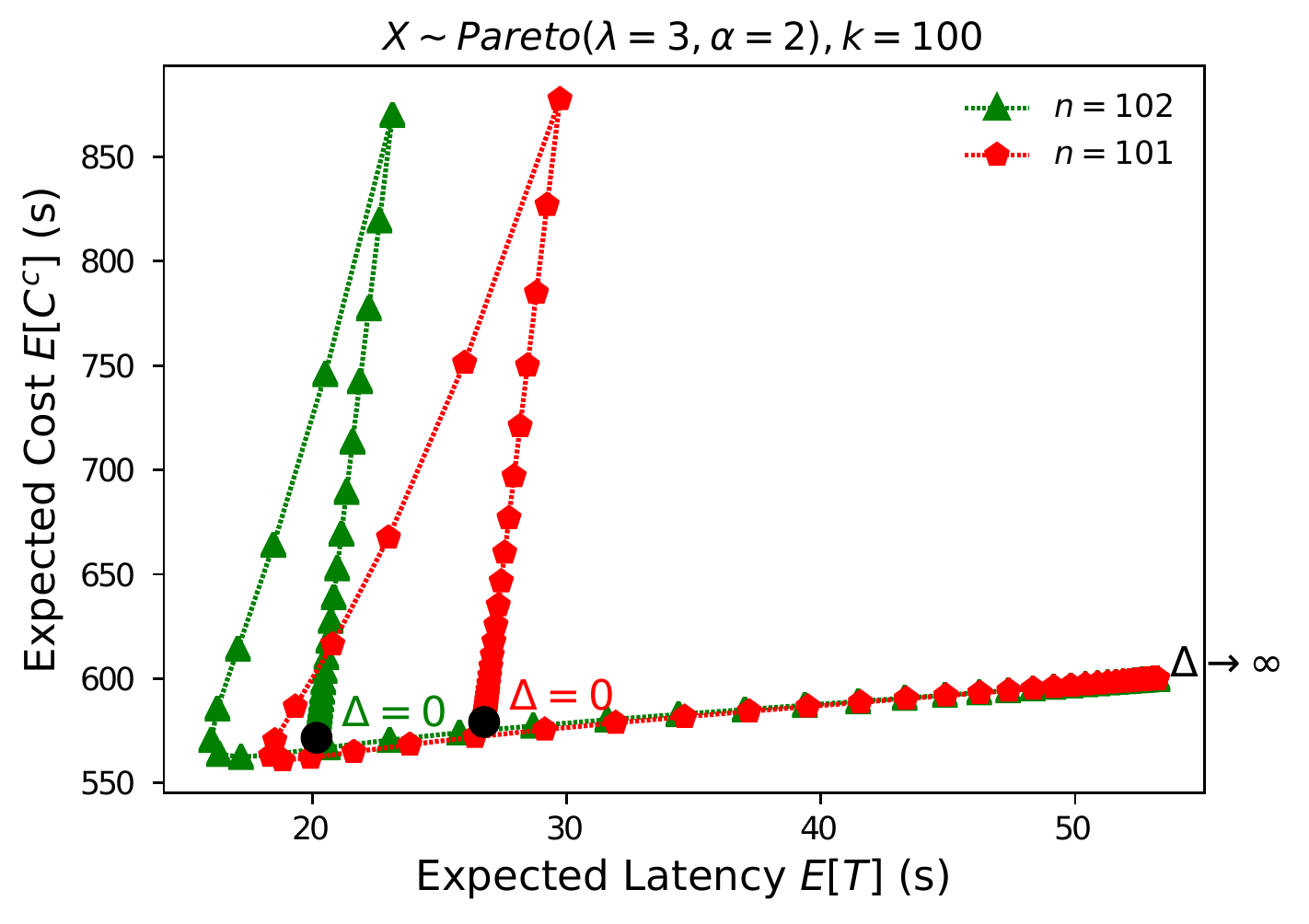}
  \end{subfigure}
  \begin{subfigure}[]{.32\textwidth}
    \centering
    \includegraphics[width=1\textwidth, keepaspectratio=true]{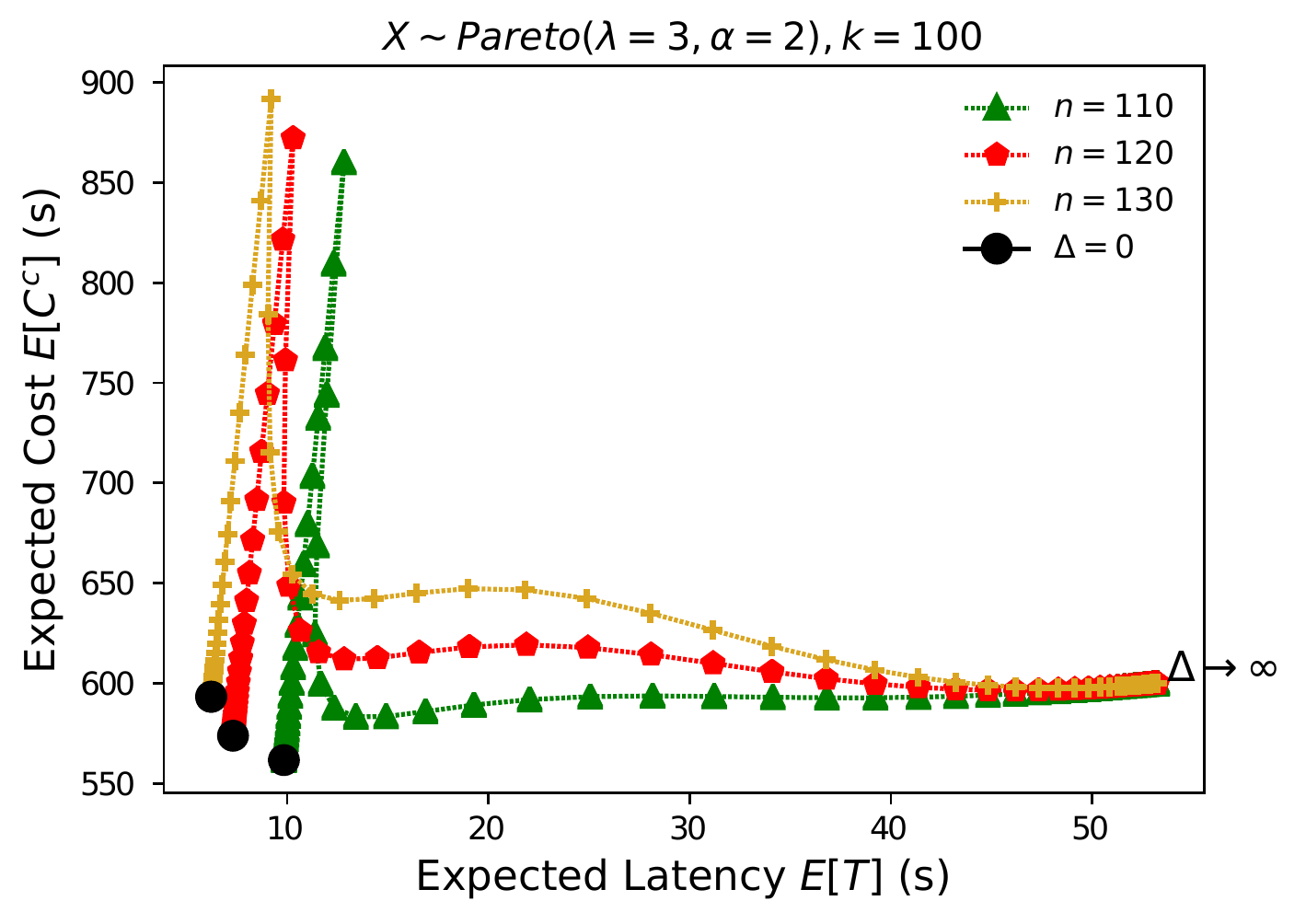}
  \end{subfigure}
  \begin{subfigure}[]{.32\textwidth}
    \centering
    \includegraphics[width=1\textwidth, keepaspectratio=true]{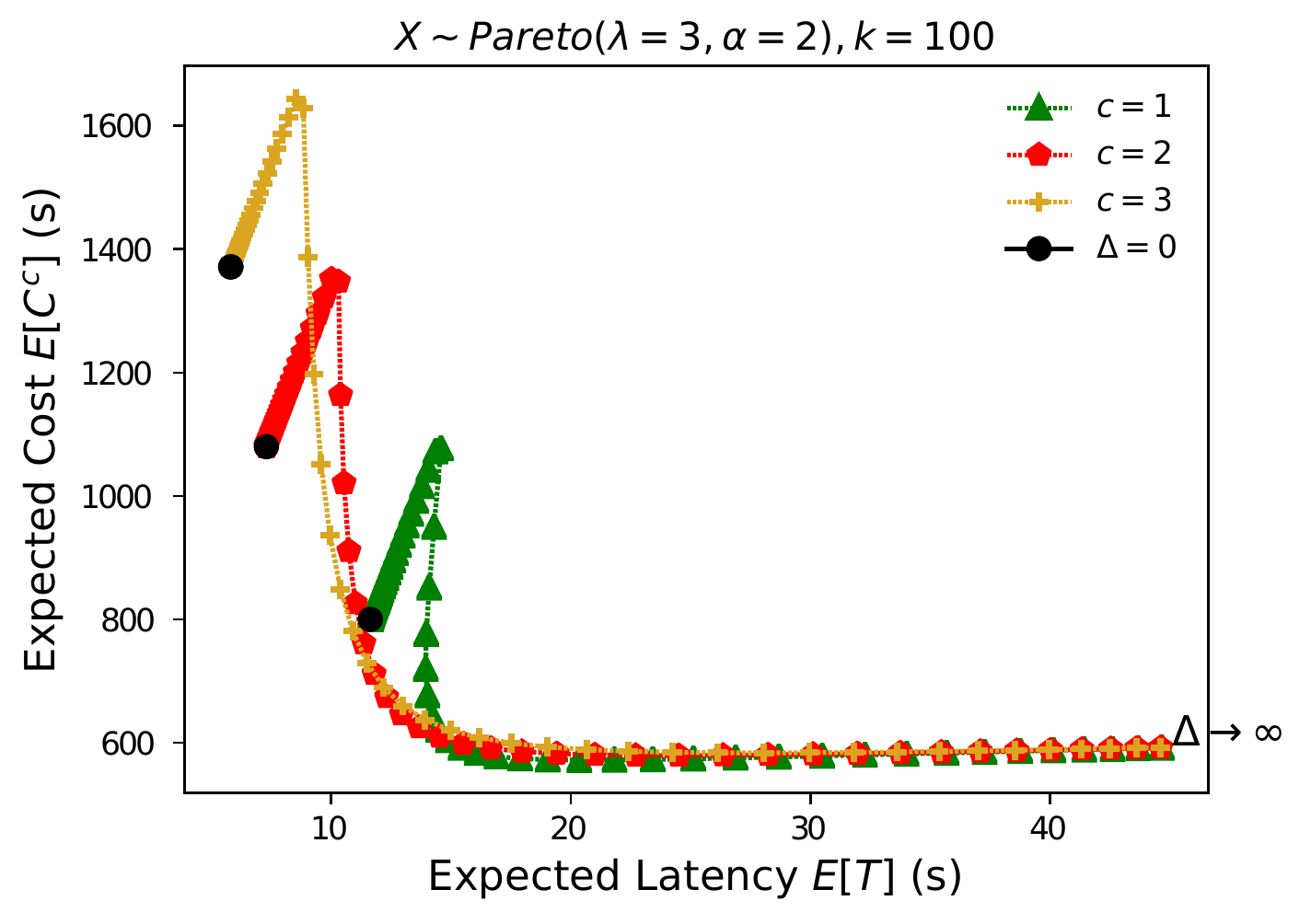}
  \end{subfigure}
  \caption{Expected cost vs.\ latency for coding $(k, n, \Delta)$ and replication $(k, c, \Delta)$ systems with relaunch. Each curve is plotted by interpolating between incremental steps of $\Delta$. Delaying redundancy is effective to reduce cost under low level of coded redundancy (Left) while it is not effective under higher level of coded redundancy (Middle) and not effective in replicated redundancy at all times (Right).}
  \label{fig:fig_k_nc_wrelaunch_E_C_vs_E_T}
\end{figure*}

Relaunching allows reducing the average number of stragglers by replacing tasks that appear to be straggling with fresh copies. For relaunching to be effective, loss incurred by starting fresh copies from scratch should be compensated by avoiding very long execution times of stragglers. In other words, for relaunching to be able to reduce latency and cost, task execution times must be heavy tailed beyond a threshold. If the tail is lighter than this threshold, relaunching actually hurts and increases cost and latency (e.g., relaunching always hurts when task execution times are light tailed). Corollary \ref{cor_k_wrelaunch_E_T__opt_d_suff_a} gives a sufficient condition on tail index $\alpha$ such that for any $\alpha$ less than $\frac{\ln(k+1)}{\ln(4)}$, relaunching helps to reduce cost and latency. Note that this upper bound does not depend on the minimum task completion time $\lambda$ and is only proportional to the logarithm of the number of tasks $k$, which we also validated by numerically computing the exact upper limit on $\alpha$. This upper bound given as the sufficient condition and the exact upper limit on $\alpha$ get closer as $k$ increases, which is illustrated in Fig.~\ref{fig:fig_k_wrelaunch__reduc_in_E_T_vs_a}.

\vspace{1ex}
\noindent
\textbf{Redundancy with relaunch:} Here we study the effect of adding redundancy together with relaunching remaining tasks at time $\Delta$. We modify the previously studied redundancy systems as the follows. In replication $(k, c, \Delta)$-system, each remaining task at time $\Delta$ is relaunched together with $c$ new replicas. In coding $(k, n, \Delta)$-system, each remaining task at time $\Delta$ is relaunched and overall $n-k$ new parity tasks are added.

Thm.~\ref{thm_k_cn_wrelaunch_E_T__E_C} state exact expressions for the expected cost and latency in respectively replicated and coded redundancy systems with relaunch. Fig.~\ref{fig:fig_k_nc_wrelaunch_E_C_vs_E_T} plots cost vs.\ latency curves for varying level of redundancy. When level of coded redundancy is low (e.g., $n-k = 1,2$), there is an optimum delay $\Delta$ that gives the minimum cost and latency as observed previously for no-redundancy system with relaunch. As the level of coded redundancy increases (e.g., $n-k \geq 10$), redundancy becomes a greater effect on cost and latency than relaunching stragglers, and delaying redundancy becomes not effective in reducing cost as observed previously for redundancy systems with no relaunch (see \cite{ourmamapaper}). In replication system, delaying redundancy is ineffective to reduce cost at all times. This is because replicating each remaining task even by one is enough to dominate relaunching stragglers in terms of the effect on cost and latency.

\begin{theorem}
  Suppose task execution time is $Pareto(\lambda, \alpha)$. In replication $(k, c, \Delta)$-system with relaunch, expected job execution time is
  \begin{equation*}
  \begin{split}
    E[T] =
    \begin{cases}
      \Delta + \lambda k!\frac{\Gamma(1-\tilde{\alpha}^{-1})}{\Gamma(k+1-\tilde{\alpha}^{-1})} & \Delta \leq \lambda, \\
      \begin{split}
        & \lambda\frac{\Gamma(1-\tilde{\alpha}^{-1})}{\Gamma(-\tilde{\alpha}^{-1})}B(k-kq+1, -\tilde{\alpha}^{-1}) \\
        & - \lambda\frac{\Gamma(1-\alpha^{-1})}{\Gamma(-\alpha^{-1})}B(k-kq+1, -\alpha^{-1}) \\
        &+ E[T_{no}].
      \end{split} & o.w.
    \end{cases}
  \end{split}
  \label{eq:eq_k_c_wrelaunch_E_T}
  \end{equation*}
  where $E[T_{no}]$ is the expected job completion time for no-redundancy system with relaunch as given in eq.~\eqref{eq:eq_k_wrelaunch_E_T}.
  
  Expected cost with ($C^c$) and without ($C$) task cancellation is
  \begin{equation*}
  \begin{split}
    E[C^c] &=
    \begin{cases}
      k\Delta + k\lambda(c+1)\frac{\tilde{\alpha}}{\tilde{\alpha}-1} & \Delta \leq \lambda, \\
      \begin{split}
        & \frac{k\alpha}{(\alpha-1)}(\lambda - \Delta(1-q)) \\
        &+ k(1-q)\Delta + k\lambda(c+1)(1-q)\frac{\tilde{\alpha}}{\tilde{\alpha}-1}
      \end{split}
      & o.w.
    \end{cases} \\
    E[C] &=
    \begin{cases}
      k\Delta + k\lambda(c+1)\frac{\alpha}{\alpha-1} & \Delta \leq \lambda, \\
      \begin{split}
        & \frac{k\alpha}{(\alpha-1)}(\lambda - \Delta(1-q)) + k(1-q)\Delta \\
        &+ k\lambda(c+1)(1-q)\frac{\alpha}{\alpha-1}
      \end{split}
      & o.w.
    \end{cases}
  \end{split}
  \label{eq:eq_k_c_wrelaunch_E_C}
  \end{equation*}
  where $\tilde{\alpha} = (c+1)\alpha$ and $q = \mathbbm{1}(\Delta > \lambda)(1 - (\frac{\lambda}{\Delta})^{\alpha})$.
  In coding $(k, n, \Delta)$-system with relaunch, expected job execution time is
  \begin{equation*}
  \begin{split}
    E[T] =
    \begin{cases}
      \Delta + \lambda \frac{n!}{(n-k)!}\frac{\Gamma(n-k+1-\alpha^{-1})}{\Gamma(n+1-\alpha^{-1})} & \Delta \leq \lambda, \\
      \begin{split}
        & \Delta(1-q^k) + \lambda(\frac{B(n-kq+1,-\alpha^{-1})}{B(n-k+1,-\alpha^{-1})} \\
        &+ kB(q;k,1-\alpha^{-1}) - q^k)
      \end{split} & o.w.
    \end{cases}
  \end{split}
  \label{eq:eq_k_n_wrelaunch_E_T}
  \end{equation*}
  
  Expected cost with ($C^c$) and without ($C$) task cancellation is
  \begin{equation}
  \begin{split}
    E[C^c] &=
    \begin{cases}
      k\Delta + \lambda\frac{n}{\alpha-1}(\alpha - \frac{\Gamma(n)}{\Gamma(n-k)}\frac{\Gamma(n-k+1-\alpha^{-1})}{\Gamma(n+1-\alpha^{-1})}) & \Delta \leq \lambda, \\
      \begin{split}
        & \frac{\alpha}{\alpha-1}(k(1-q)(\lambda-\Delta) + n\lambda) \\
        &+ k(1-q)\Delta - \lambda(n-k)q^k \\
        &- \frac{\lambda}{\alpha-1}(n-k)\frac{B(n-kq+1, -\alpha^{-1})}{B(n-k+1, -\alpha^{-1})}.
      \end{split} & o.w.
    \end{cases} \\
    E[C] &=
    \begin{cases}
      k\Delta + \frac{n\lambda}{1-\alpha^{-1}} & \Delta \leq \lambda, \\
      \begin{split}
      & \frac{\alpha}{\alpha-1}(k\lambda(1-q+q^k) + n\lambda(1-q^k)) \\
      &- \frac{k\Delta(1-q)}{\alpha-1}
      \end{split} & o.w.
    \end{cases}
  \end{split}
  \label{eq:eq_k_n_wrelaunch_E_Cnocancel}
  \end{equation}
  where $q = \mathbbm{1}(\Delta > \lambda)(1 - (\frac{\lambda}{\Delta})^{\alpha})$.
  
\label{thm_k_cn_wrelaunch_E_T__E_C}
\end{theorem}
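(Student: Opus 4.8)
The plan is to reduce everything to Thm.~\ref{thm_k_wrelaunch_E_T} and Thm.~\ref{thm_zerodelay_red_E_T__E_C} by conditioning on the random variable $R$, the number of the $k$ tasks that complete before the relaunch instant $\Delta$. As in the no-redundancy case, $R \sim Binomial(k,q)$ with $q = \mathbbm{1}(\Delta > \lambda)(1 - (\lambda/\Delta)^{\alpha})$, and the two branches of every formula correspond to the regimes $\Delta \le \lambda$ (no task can finish before $\Delta$, so $q = 0$ and $R = 0$ a.s.) and $\Delta > \lambda$ ($q > 0$). The only structural difference from the plain relaunch system of Thm.~\ref{thm_k_wrelaunch_E_T} is the completion law of the $k-R$ slots relaunched at $\Delta$: in replication each slot restarts as $c+1$ fresh copies whose minimum is $Pareto(\lambda,\tilde{\alpha})$ with $\tilde{\alpha}=(c+1)\alpha$, while in coding the $k-R$ relaunched tasks together with the $n-k$ fresh parities form a pool of $n-R$ iid $Pareto(\lambda,\alpha)$ tasks, any $k-R$ of whose completions finish the job.

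First I would dispose of the branch $\Delta \le \lambda$. Here every task is still running at $\Delta$, so the system is canceled and restarts as a zero-delay redundancy system, while the pre-$\Delta$ phase contributes exactly $k\Delta$ to the cost and an additive $\Delta$ to the latency. Thus the first branch of each expression is $\Delta$ (resp.\ $k\Delta$) plus the corresponding zero-delay quantity from Thm.~\ref{thm_zerodelay_red_E_T__E_C}: for replication the latency uses the $X_{k:k}$ order statistic of $Pareto(\lambda,\tilde{\alpha})$ and the cost with cancellation is $k\lambda(c+1)\tilde{\alpha}/(\tilde{\alpha}-1)$, whereas without cancellation all $k(c+1)$ copies run to completion giving $k\lambda(c+1)\alpha/(\alpha-1)$; the coding branches follow identically from the $X_{n:k}$ expressions of Thm.~\ref{thm_zerodelay_red_E_T__E_C}.

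The substantive branch is $\Delta > \lambda$. Conditioning on $R=r<k$, the job finishes at $\Delta$ plus, for replication, the maximum of $k-r$ iid $Pareto(\lambda,\tilde{\alpha})$ variables, and for coding the order statistic $X_{(n-r):(k-r)}$ of $Pareto(\lambda,\alpha)$; the conditional costs add the pre-$\Delta$ work (the $r$ truncated lifetimes plus $(k-r)\Delta$) to the post-$\Delta$ work of the relaunched and redundant copies, which run to completion for $C$ and are canceled at job completion for $C^c$. I would then average these conditional quantities against the $Binomial(k,q)$ weights. The cleanest way to evaluate the resulting sums is to first assemble $Pr\{T>t\}$ by independence, exactly as in eq.~\eqref{eq:eq_k_wrelaunch_Pr_T_g_t}, and integrate $E[T]=\int_0^{\infty} Pr\{T>t\}\,dt$; the substitution $u=(\lambda/t)^{\alpha}$ turns each integral into an incomplete Beta function $B(q;\cdot,\cdot)$, while the summed bulk term collapses to a complete-Beta ratio in which the random count $R$ is replaced by its mean ($k-kq$ remaining replication groups, or $n-kq$ pooled coding tasks). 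This is why the answers carry $B(k-kq+1,-\tilde{\alpha}^{-1})$ and $B(n-kq+1,-\alpha^{-1})$, and why the replication latency appears as the $\tilde{\alpha}$-term minus the $\alpha$-term plus $E[T_{no}]$: subtracting the $\alpha$ version removes the relaunch contribution already inside $E[T_{no}]$ and reinstates it with $\tilde{\alpha}$.

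I expect the main obstacle to be the exact evaluation of these tail integrals in closed form, together with the cancellation bookkeeping for $C^c$. The delicate points are (i) showing that the discrete average over $R$ of the $\Gamma$-ratio order-statistic means equals the continuous complete-Beta ratio plus the correction terms $kB(q;k,1-\alpha^{-1})$ and the boundary term $q^k$ (which encodes the event $R=k$, where the job finishes before $\Delta$ with no relaunch), and (ii) for $C^c$, expressing the work saved by cancellation in terms of the same order statistics that govern the latency, so that the $\Gamma$-ratio and $B(n-kq+1,-\alpha^{-1})$ terms reappear with the coefficients $n-k$ and $1/(\alpha-1)$. Once this single Beta-integral identity is established, every expression in the statement follows by specializing it and collecting terms.
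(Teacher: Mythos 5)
The paper states this theorem without any proof, so the only internal benchmark is the proof sketch of Thm.~\ref{thm_k_wrelaunch_E_T}, which your proposal correctly extends: condition on $R\sim Binomial(k,q)$, the number of tasks finished by $\Delta$, and reduce each branch to Pareto order statistics. Your handling of the $\Delta\le\lambda$ branch is complete and correct (no task can finish before $\Delta$, so each expression is the corresponding zero-delay quantity from Thm.~\ref{thm_zerodelay_red_E_T__E_C} with $\tilde{\alpha}=(c+1)\alpha$, shifted by $\Delta$ in latency and $k\Delta$ in cost), as is your identification of the post-relaunch completion laws (the minimum of $c+1$ fresh copies is $Pareto(\lambda,\tilde{\alpha})$; for coding the $n-R$ fresh tasks form an iid pool of which $k-R$ must finish).

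The gap is in the step you yourself flag as delicate, and it is not merely a deferred computation: the ``collapse'' your plan rests on is false as an exact identity. Writing the expected maximum of $m$ iid $Pareto(\lambda,\beta)$ variables as $\lambda\frac{\Gamma(1-\beta^{-1})}{\Gamma(-\beta^{-1})}B(m+1,-\beta^{-1})$, the binomial average you need is $E_R[B(k-R+1,s)]$ with $s=-\tilde{\alpha}^{-1}$, and a direct computation gives $E_R[B(k-R+1,s)]=\int_0^1 (q+(1-q)u)^k(1-u)^{s-1}du=(1-q)^{-s}\bigl[B(k+1,s)-B(q;k+1,s)\bigr]$, which is an incomplete-Beta expression and not $B(k-kq+1,s)$ (for $k=1$, $s=1$, $q=1/2$ the two sides are $3/4$ and $2/3$). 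The same discrepancy appears if you follow your alternative route of integrating the tail: the substitution $u=(\lambda/(t-\Delta))^{\tilde{\alpha}}$ in $\int_{\Delta+\lambda}^{\infty}\bigl(1-[1-(1-q)(\lambda/(t-\Delta))^{\tilde{\alpha}}]^k\bigr)dt$ followed by integration by parts yields $\lambda k(1-q)^{\tilde{\alpha}^{-1}}B(1-q;1-\tilde{\alpha}^{-1},k)$ up to boundary terms, again an incomplete-Beta function of $q$ rather than a complete Beta function evaluated at the mean count. The stated formulas are therefore reachable from your decomposition only by the additional, unacknowledged step of replacing the random count $k-R$ (resp.\ $n-R$) by its mean inside the order-statistic expectation --- a mean-field substitution that concentrates for large $k$ but is not an identity. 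To complete the argument you must either carry the exact incomplete-Beta expressions through and reconcile them with the statement, or explicitly introduce and justify the mean-count approximation; as written, the single ``Beta-integral identity'' on which every expression in your plan depends does not exist.
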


\section{Open Problems}
In the analysis given here, we assumed that execution time of each task is iid  and does not depend on resources they run on. Main justification for this assumption is that computing systems usually have large number of nodes and each task can be placed on a separate node at random. In other words, we assumed that the task execution time is decoupled from resource scheduling. However, execution time of a task that exclusively runs on a node would have different distribution than a task that runs together with several others on the same node. There are two challenges in the analysis of latency and cost with resource scheduling in mind: 1) There is not enough experimental evidence to accurately model execution time with respect to resource load, 2) Order statistics of multiple distribution families are intractable.

Moreover, in reality, task execution times cannot be arbitrarily large, but modeling as {\it SExp} or {\it Pareto} does not implement this restriction. Another possible extension of the cost and latency analysis would be considering right truncated execution time models. Main challenge for this extension is that truncation renders the order statistics analysis tedious and often intractable.


\end{document}